\documentclass[11pt,reqno]{article}
\usepackage{latexsym, amsmath, amssymb, a4, epsfig}
\usepackage{amsthm}
\usepackage{graphicx}
\usepackage{caption}
\usepackage{subcaption}
\usepackage{latexsym, color}
\usepackage{booktabs}

\newtheorem{thm}{Theorem}[section]

\newtheorem{lem}[thm]{Lemma}

\usepackage[left=1in,right=1in,top=1in,bottom=1in]{geometry}

\newcommand{\eqnref}[1]{(\ref {#1})}

\newcommand{\beq}{\begin{equation}}
\newcommand{\eeq}{\end{equation}}

\newcommand{\bef}{\begin{figure}}
\newcommand{\enf}{\end{figure}}

\title{Asymptotic Analysis of the Narrow Escape Problem in Dendritic Spine Shaped Domain: Three Dimension}

\author{Hyundae Lee\thanks{Department of Mathematics, Inha University, Incheon
402-751, Korea (hdlee@inha.ac.kr).} \and Xiaofei Li\thanks{\footnotesize Department of Mathematics, South University of Science and Technology of China, Shenzhen, China (xiaofeilee@hotmail.com). Corresponding author.} \and Yuliang Wang\thanks{\footnotesize Department of Mathematics, Hong Kong Baptist University, Kowloon Tong, Hong Kong SAR (jadelightking@qq.com).}}

\begin{document}

\maketitle

\begin{abstract}
This paper deals with the three-dimensional narrow escape problem in dendritic spine shaped domain, which is composed of a relatively big head and a thin neck. The narrow escape problem is to compute the mean first passage time of Brownian particles traveling from inside the head to the end of the neck. The original model is to solve a mixed Dirichlet-Neumann boundary value problem for the Poisson equation in the composite domain, and is computationally challenging. In this paper we seek to transfer the original problem to a mixed Robin-Neumann boundary value problem by dropping the thin neck part, and rigorously derive the asymptotic expansion of the mean first passage time with high order terms. This study is a nontrivial generalization of the work in \cite{Li}, where a two-dimensional analogue domain is considered.
\end{abstract}
\section{Introduction}
The narrow escape problem (NEP) in diffusion theory, which goes back to Lord Rayleigh \cite{lord}, is to calculate the mean first passage time (MFPT) of a Brownian particle to a small absorbing window on the otherwise reflecting boundary of a bounded domain. NEP has recently attracted significant attention from the point of view of mathematical and numerical modeling due to its relevance in molecular biology and biophysics. The small absorbing window often represents a small target on a cellular membrane, such as a protein channel, which is a target for ions \cite{hille}, a receptor for neurotransmitter molecules in a neuronal synapse \cite{elias}, a narrow neck in the neuronal spine, which is a target for calcium ions \cite{kork}, and so on.  A main concern for NEP is to derive an asymptotic expansion of the MFPT when the size of the small absorbing window  tends to zero. There have been several significant works deriving the leading-order and higher-order
terms of the asymptotic expansions of MFPT for regular and singular domains in two and three dimensions \cite{agkls,AKL,benichou,harris,cluster,review,laws,simon,leakage}.

In this paper we focus on the MFPT of calcium ion in the three-dimensional dendritic spine shape domain. For many neurons in the mammalian brain, the postsynaptic terminal of an excitatory synapse is found in a specialized structure protruding from the dendritic draft, known as a dendritic spine (Figure~\ref{spine}(a)). Dendritic spines function as biochemical compartments that regulate the duration and spread of postsynaptic calcium fluxes produced by glutamatergic neurotransmission at the synapses \cite{thomasG}. Calcium ions are ubiquitous signaling molecules that accumulate in the cytoplasm in response to diverse classes of stimuli and, in turn, regulate many aspects of cell function \cite{michaelJ}. The rate of diffusional escape from spines through their narrow neck, which we call MFPT, is one factor that regulates the retention time of calcium ions in dendritic spines.

Most spines have a bulbous head, and a thin neck that connects the head of the spine to the shaft of the dendrite (Figure~\ref{spine}). Spines are separated from their parent dendrites by this thin neck and compartmentalize calcium during synaptic stimulation. It has been given that many organelles inside the spine head do not affect the nature of the random motion of ions, mainly due to their large size relative to that of ions \cite{modeling}. So we assume the environment inside the spine is homogenous, which means the movement of the calcium ions is pure diffusion.

\bef
\centering
\begin{subfigure}[b]{0.4\textwidth}
  \includegraphics[width=\textwidth]{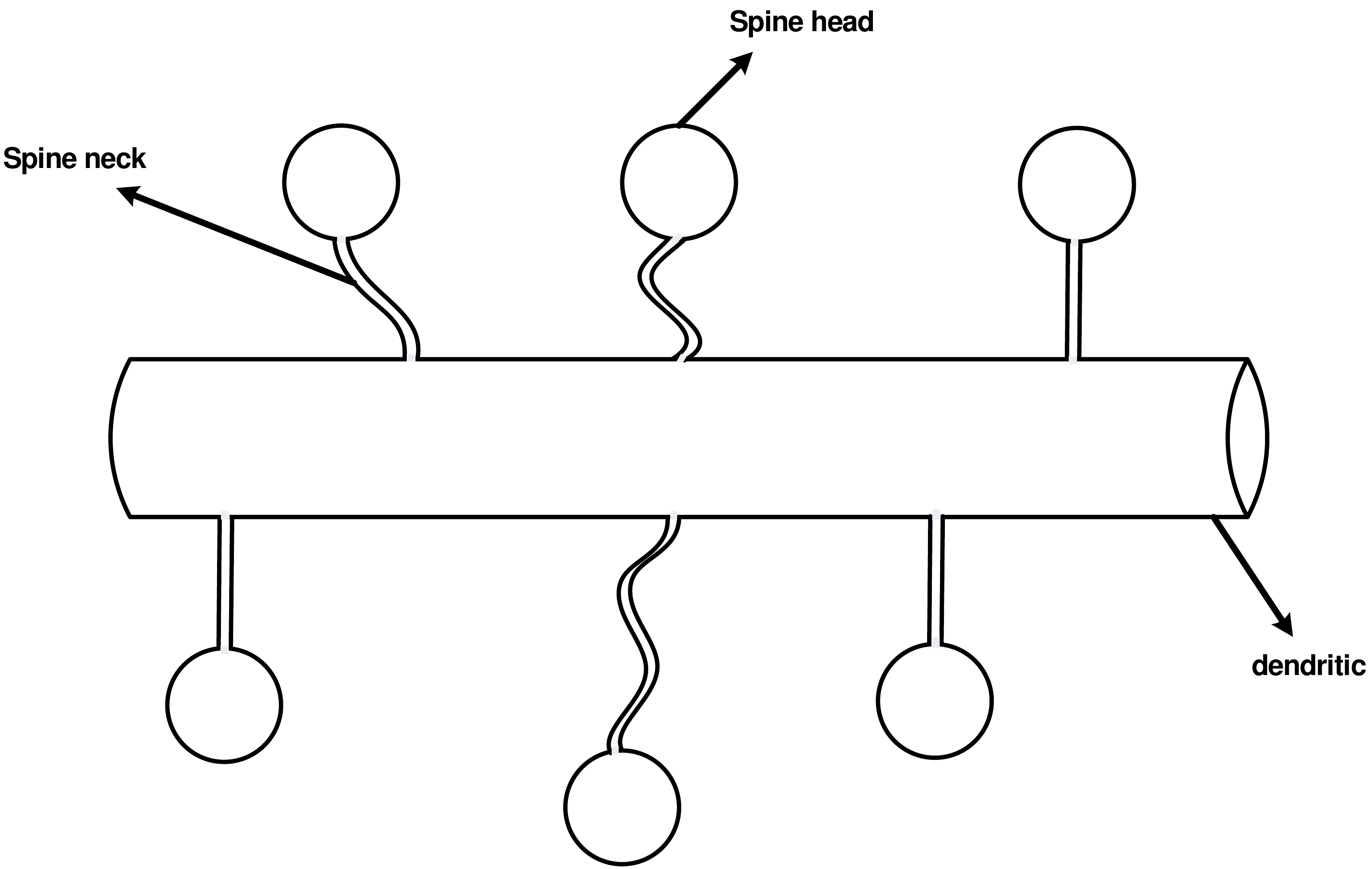}\hspace{1cm}
  \caption{}
  \end{subfigure}
  \hspace{1cm}
  \begin{subfigure}[b]{0.4\textwidth}
  \includegraphics[width=\textwidth]{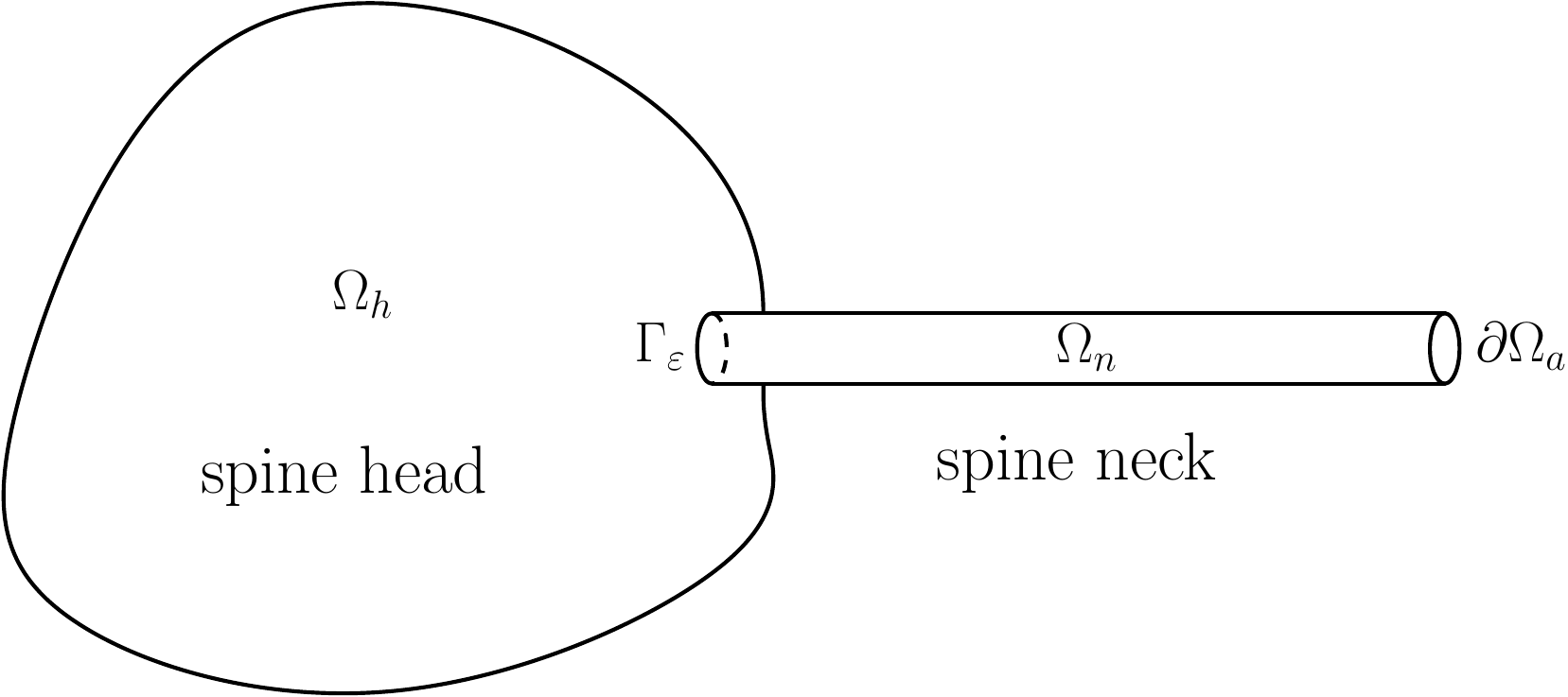}
  \caption{}
  \end{subfigure}
\caption{(a).Dendritic spine morphology. (b).The modeling shape of dendritic spine $\Omega=\Omega_h\cup \Omega_n$ with a spherical spine head $\Omega_h$ and a cylindrical spine neck $\Omega_n$. }
  \label{spine}
 \enf

The NEP can be mathematically formulated in the following way. Let $\Omega_h$ be a bounded simply connected domain in $\mathbb{R}^{3}$. Let $\Omega_n$ be a cylinder with length $L$ and radius $r=O(\varepsilon)$ which is much smaller than the length. Connecting this cylinder with $\Omega_h$, we have the geometry for the spine (Figure~\ref{spine}(b)). The connection part between $\Omega_h$ and $\Omega_n$ is a small interface which is denoted by $\Gamma_{\varepsilon}$.  Let $\Omega=\Omega_h\cup \Omega_n$ denote the domain for the whole spine. Suppose that the boundary $\partial \Omega$ is decomposed into the reflecting part $\partial \Omega_{r}$ and the absorbing part $\partial \Omega_{a}$, where $\partial \Omega_{a}$ is the end of the thin cylinder neck. We assume that the area of $\partial \Omega_{a}$, $|\partial \Omega_{a}|=O(\varepsilon^2)$ is much smaller than the area of the whole boundary. The NEP is to calculate the MFPT $u$ which is the unique solution to the following boundary value problem,
\beq \label{main_equation}
   \begin{cases}
         \Delta u=-1,\quad &\mbox{in}~\Omega,\\[1ex]
         \dfrac{\partial u}{\partial \nu}=0,&\mbox{on}~\partial \Omega_{r},\\[1ex]
         u=0,&\mbox{on}~\partial \Omega_{a},
         \end{cases}
         \eeq
where $\nu$ is the outer unit normal to $\partial\Omega$. The asymptotic analysis for NEP arises in deriving the asymptotic expansion of $u$ as
$\varepsilon \rightarrow 0$, from which one can estimate the escape time of the calcium ions. In \cite{ania, michael}, the first order of the asymptotic expansion has been obtained numerically as
\begin{align}\label{eq:1}
  u\approx \frac{|\Omega_h|L}{\pi\varepsilon^2},
\end{align}
where $|\Omega_h|$ denotes the volume of the spine head. In this study, we derive higher order asymptotic solution to (\ref{main_equation}) by means of the Neumann-Robin model which is proposed in \cite{Li} to deal with the narrow escape time in a two-dimensional analogue domain.

In the Robin-Neumann model the solution to the original boundary value problem \eqref{main_equation} in the singular domain $\Omega$ is approximated by the solution to the following boundary value problem in the smooth domain $\Omega_h$:
\beq \label{NR_eq}
\begin{cases}
\Delta u_{\varepsilon}=-1,\quad &\mbox{in}~\Omega_h,\\[1ex]
\dfrac{\partial u_{\varepsilon}}{\partial \nu}=0,&\mbox{on}~\partial \Omega_{r},\\[1em]
\dfrac{\partial u_{\varepsilon}}{\partial \nu}+\alpha u_{\varepsilon}=\beta,&\mbox{on}~\Gamma_\varepsilon:=\partial \Omega_h \setminus \overline{\partial \Omega_{r}},
\end{cases}
\eeq
where $\Omega_h$ is the spine head of $\Omega$ mentioned in Figure~\ref{spine}, and $\Gamma_\varepsilon:=\partial \Omega_h \setminus \overline{\partial \Omega_{r}}$ is the connection part between the spine head and the spine neck. Here $\alpha>0$ and $\beta>0$ are constants to be determined. We assume $\alpha < \alpha_0$ for some a priori constant $\alpha_0>0$ and $\varepsilon$ is sufficiently small so that $\alpha \varepsilon \ll 1$. We shall apply the layer potential technique to derive the asymptotic solution $u_{\varepsilon}$ to (\ref{NR_eq}) as
\begin{align*}
u_{\varepsilon}(x) \approx \frac{|\Omega_h|}{\pi\alpha\varepsilon^2}
+\frac{|\Omega_h| M}{\pi^2\varepsilon}+\frac{\beta}{\alpha}-\frac{|\Omega_h|}{2\pi|x-x^*|}
\end{align*}
for $x \in \Omega_h$ and away from $\Gamma_\varepsilon$, where $M$ is a computable constant, $x^*$ is a fixed point in $\Gamma_\varepsilon$.

This study is organized as follows. In section 2, we review the Neumann function for the Laplacian in $\mathbb{R}^3$, which is a major tool for our study. In section 3, we derive the asymptotic solution for Robin-Neumann model. In section 4, we apply the Robin-Neumann boundary model to approximate the MFPT of calcium ion in dendritic spine. Numerical experiments are also given in this section to confirm the theoretical results. This study ends with a short conclusion in section 5.

\section{Neumann function in $\mathbb{R}^3$}

Let $\Omega$ be a bounded domain in $\mathbb{R}^3$ with $C^2$ smooth boundary $\partial\Omega$, and let $N(x,z)$ be the Neumann function for $-\Delta$ in $\Omega$ with a given $z\in\Omega$. That is, $N(x,z)$ is the solution to the boundary value problem
\begin{equation*}
  \begin{cases}
        \Delta_{x} N(x,z)=-\delta_{z}, &x\in \Omega,\\[1ex]
      \displaystyle\frac{\partial N}{\partial \nu_{x}}=-\frac{1}{|\partial \Omega|},  & x\in \partial\Omega,\\[1em]
      \displaystyle\int_{\partial \Omega}N(x,z)d\sigma(x)=0,&
  \end{cases}
\end{equation*}
where $\nu$ is the outer unit normal to the boundary $\partial\Omega$.

If $z\in \Omega$, then $N(x,z)$ can be written in the form
\begin{eqnarray*}
 N(x,z)=\frac{1}{4\pi|x-z|}+R_{\Omega}(x,z),\quad x\in\Omega,
 \end{eqnarray*}
where $R_{\Omega}(x,z)$ has weaker singularity than $1/|x-z|$ and solves the boundary value problem

\begin{equation*}
  \begin{cases}
-\Delta_{x} R_{\partial\Omega}(x,z)=0,&x\in\Omega,\\[1ex]
\displaystyle\frac{\partial R_{\Omega}}{\partial \nu_{x}}\Big|_{x\in \partial \Omega}=-\frac{1}{|\partial \Omega|}+\frac{1}{4\pi}\frac{\langle x-z,\nu_{x}\rangle}{|x-z|^{3}},&x\in \partial \Omega.
\end{cases}
\end{equation*}
where $\langle\cdot,\cdot\rangle$ denotes the inner product in $\mathbb{R}^3$.

If $z\in \partial \Omega$, then Neumann function on the boundary is denoted by $N_{\partial \Omega}$ and can be written as
\begin{align}\label{eq:4}
N_{\partial \Omega}(x,z)=\frac{1}{2\pi|x-z|}+R_{\partial\Omega}(x,z),\quad x \in \Omega,z \in \partial \Omega,
\end{align}
where $R_{\partial \Omega}(x,z)$ has weaker singularity than $1/|x-z|$ and solves the boundary value problem

\begin{equation*}
  \begin{cases}
        \Delta_{x}R_{\partial \Omega}(x,z)=0,& x\in \Omega,\\[1ex]
       \displaystyle\frac{\partial R_{\partial\Omega}}{\partial \nu_{x}}\Big|_{x\in \partial \Omega}=-\frac{1}{|\partial \Omega|}+\frac{1}{2\pi}\frac{\langle x-z,\nu_{x}\rangle}{|x-z|^{3}},&x\in \partial \Omega, ~z\in \partial \Omega.
       \end{cases}
\end{equation*}
The structure of $R_{\partial\Omega}$ is given in \cite{popov} as
\beq
R_{\partial\Omega}(x,z)=-\frac{1}{4\pi}H(z)\ln|x-z|+v_{\partial\Omega}(x,z),
\label{NLN}
\eeq
where $z\in\partial\Omega$, $x\in\Omega\cup\partial\Omega$, where $H(z)$ denotes the mean curvature of $\partial\Omega$ at $z$, and $v_{\partial\Omega}$ is a bounded
function.

\section{Derivation of the asymptotic expansion}
The goal in this section is to derive the asymptotic expansion of $u_{\varepsilon}$ to \eqref{NR_eq} as $\varepsilon\rightarrow 0$. For simplicity we assume the connection part $\Gamma_\varepsilon$ lies in a plane. The general case where $\Gamma_\varepsilon$ is curved can be handled with minor modifications.

\begin{thm} \label{thm:main}
The solution $u_{\varepsilon}$ to the boundary value problem \eqnref{NR_eq} has the following asymptotic expansion,
\begin{equation*}
u_{\varepsilon}(x)=\frac{|\Omega_h|}{\pi\alpha\varepsilon^2}
+\frac{|\Omega_h| M}{\pi^2\varepsilon}+\frac{\beta}{\alpha}-\frac{|\Omega_h|}{2\pi|x-x^*|}+\Phi(x)+O(\alpha),
\label{u}
\end{equation*}
where $M$ is a constant given by
\begin{align}
  \label{eq:2}
  M = \int_{\Gamma_1} \int_{\Gamma_1} \frac{1}{2\pi|x-z|} dxdz,
\end{align}
$x^*$ is a fixed point in $\Gamma_\varepsilon$, and $\Phi(x)$ is a bounded function depending only on $\Omega_h$. The remainder $O(\alpha)$ is uniform in $x\in \Omega_h$ satisfying dist$(x,\Gamma_{\varepsilon})\geq c$ for some constant $c>0$.
\end{thm}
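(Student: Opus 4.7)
The plan is to reduce the Robin--Neumann problem \eqref{NR_eq} to a boundary integral equation on the small planar disk $\Gamma_\varepsilon$ via the Neumann function of $\Omega_h$ from Section~2, invert that equation perturbatively in $\alpha$ and $\varepsilon$, and then propagate the resulting surface data into the interior. Applying Green's second identity to $u_\varepsilon$ and $N(x,\cdot)$ and using the boundary conditions in \eqref{NR_eq} to eliminate the normal derivative yields the representation
\begin{equation*}
u_\varepsilon(x) = F(x) + \bar u_\varepsilon + \beta\, T\mathbf 1(x) - \alpha\, T u_\varepsilon(x), \qquad x\in\Omega_h,
\end{equation*}
where $F(x) := \int_{\Omega_h} N(x,y)\,dy$ is bounded, $\bar u_\varepsilon := |\partial\Omega_h|^{-1}\int_{\partial\Omega_h} u_\varepsilon\,d\sigma$, and $T\phi(x) := \int_{\Gamma_\varepsilon} N_{\partial\Omega_h}(x,z)\phi(z)\,d\sigma(z)$ is the single-layer operator built from the boundary Neumann kernel $N_{\partial\Omega_h}(x,z) = (2\pi|x-z|)^{-1} + R_{\partial\Omega_h}(x,z)$ recalled in \eqref{eq:4}.

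Integrating $\Delta u_\varepsilon = -1$ over $\Omega_h$ together with the Robin condition gives the global compatibility identity $\int_{\Gamma_\varepsilon} u_\varepsilon\,d\sigma = \alpha^{-1}(|\Omega_h| + \beta\pi\varepsilon^2)$. Restricting the representation to $x\in\Gamma_\varepsilon$ produces the boundary integral equation $(I+\alpha T)\phi = \bar u_\varepsilon + F + \beta T\mathbf 1$ for $\phi := u_\varepsilon|_{\Gamma_\varepsilon}$. Rescaling $z = x^* + \varepsilon\tilde z$ onto the unit disk $\Gamma_1$ shows $\|T\|_{L^\infty(\Gamma_\varepsilon)} = O(\varepsilon)$, so $(I+\alpha T)^{-1}$ is given by a convergent Neumann series, yielding $\phi = \bar u_\varepsilon + F - \alpha\bar u_\varepsilon\, T\mathbf 1 + O(\alpha)$ on $\Gamma_\varepsilon$. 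Averaging over $\Gamma_\varepsilon$, combining with the compatibility identity, and using the rescaled integral
\begin{equation*}
\int_{\Gamma_\varepsilon}\int_{\Gamma_\varepsilon}\frac{d\sigma(x)\,d\sigma(z)}{2\pi|x-z|} \;=\; \varepsilon^3 M,
\end{equation*}
we solve for
\begin{equation*}
\bar u_\varepsilon \;=\; \frac{|\Omega_h|}{\pi\alpha\varepsilon^2} + \frac{|\Omega_h|\, M}{\pi^2\varepsilon} + \frac{\beta}{\alpha} - F(x^*) + O(\alpha).
\end{equation*}

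For $x \in \Omega_h$ with $\mathrm{dist}(x,\Gamma_\varepsilon) \geq c$, Taylor-expanding $N_{\partial\Omega_h}(x,\cdot)$ around $x^*$ on the small disk $\Gamma_\varepsilon$ gives $T\mathbf 1(x) = \varepsilon^2/(2|x-x^*|) + \pi\varepsilon^2\, R_{\partial\Omega_h}(x,x^*) + O(\varepsilon^3)$, and similarly $Tu_\varepsilon(x) = \bar u_\varepsilon\, T\mathbf 1(x) + O(\varepsilon^2)$. Substituting the expression for $\bar u_\varepsilon$ back into the representation formula, the product $-\alpha\bar u_\varepsilon\, T\mathbf 1(x)$ generates precisely the singular term $-|\Omega_h|/(2\pi|x-x^*|)$ together with the bounded contribution $-|\Omega_h|\, R_{\partial\Omega_h}(x,x^*)$; grouping the latter with $F(x)-F(x^*)$ into $\Phi(x) := F(x) - F(x^*) - |\Omega_h|\, R_{\partial\Omega_h}(x,x^*)$ produces the claimed asymptotic expansion.

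The main technical obstacle is showing that the mean-zero fluctuation $\phi - \bar\phi$ on $\Gamma_\varepsilon$ contributes only $O(\alpha)$ to $\alpha Tu_\varepsilon(x)$, uniformly for $x$ with $\mathrm{dist}(x,\Gamma_\varepsilon)\geq c$. This requires a quantitative rescaled $L^\infty$ estimate for the weakly singular operator $T$ acting on mean-zero densities, for which the structural decomposition \eqref{NLN} of $R_{\partial\Omega_h}$ into a logarithmic piece plus a bounded remainder $v_{\partial\Omega_h}$ is essential; the same decomposition underlies the boundedness of $\Phi$ on the region $\mathrm{dist}(x,\Gamma_\varepsilon)\geq c$.
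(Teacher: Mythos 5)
Your formulation is a dual of the paper's: you keep the boundary trace $\phi := u_\varepsilon|_{\Gamma_\varepsilon}$ as the unknown in the integral equation, whereas the paper keeps the flux $\phi_\varepsilon = \partial u_\varepsilon/\partial\nu|_{\Gamma_\varepsilon}$ and substitutes the Robin condition $u_\varepsilon = (\beta-\phi_\varepsilon)/\alpha$ on $\Gamma_\varepsilon$. The representation formula (your $F(x)+\bar u_\varepsilon+\beta T\mathbf 1 - \alpha Tu_\varepsilon$, the paper's $g + C_\varepsilon + \int_{\Gamma_\varepsilon}N_{\partial\Omega_h}\phi_\varepsilon$) and the compatibility relation $\int_{\Gamma_\varepsilon}\partial_\nu u_\varepsilon = -|\Omega_h|$ are the same inputs, and your scaling $\|T\|_{L^\infty\to L^\infty}=O(\varepsilon)$ plays exactly the role of the paper's factor $I+\alpha\varepsilon(L+\varepsilon L_1)$. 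The overall plan is therefore sound and essentially equivalent to the paper's.

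There is, however, a concrete gap in the computation of $\bar u_\varepsilon$. When you average the integral equation over $\Gamma_\varepsilon$ you retain only the $1/(2\pi|x-z|)$ piece of the kernel (giving $\varepsilon^3 M$) and drop the bounded remainder $R_{\partial\Omega_h}=v_{\partial\Omega_h}$ in \eqnref{NLN} (recall $H\equiv 0$ on the flat face). That remainder contributes $(\pi\varepsilon^2)^2\,v_{\partial\Omega_h}(x^*,x^*)+o(\varepsilon^4)$ to $\int_{\Gamma_\varepsilon}\int_{\Gamma_\varepsilon}N_{\partial\Omega_h}$, and after multiplying by $\alpha\bar\phi\sim |\Omega_h|/(\pi\varepsilon^2)$ and dividing by $\pi\varepsilon^2$ it yields the $O(1)$ term $|\Omega_h|\,v_{\partial\Omega_h}(x^*,x^*)$, which is present in the paper's \eqnref{c} via $\int_{\Gamma_1}L_1[1]=\pi^2 v_{\partial\Omega}(x^*,x^*)+O(\varepsilon)$ but missing from your expression for $\bar u_\varepsilon$. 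This is not absorbable into $O(\alpha)$, and it propagates: your $\Phi(x)=F(x)-F(x^*)-|\Omega_h|R_{\partial\Omega_h}(x,x^*)$ differs from the paper's $\Phi(x)=g(x)-g(x^*)+|\Omega_h|[v_{\partial\Omega_h}(x^*,x^*)-v_{\partial\Omega_h}(x,x^*)]$ by precisely this constant. The inconsistency is glaring because you \emph{do} retain the $R_{\partial\Omega_h}$ contribution in the interior expansion of $T\mathbf 1(x)$; the same retention must be made on $\Gamma_\varepsilon$. The other issue you flag at the end --- bounding $\alpha T(\phi-\bar\phi)$ --- is genuine but closes with what you already have: one Neumann-series iteration shows $\phi-\bar\phi=O(1/\varepsilon)$ in $L^\infty(\Gamma_\varepsilon)$, and since $T$ maps $L^\infty(\Gamma_\varepsilon)$ to itself with norm $O(\varepsilon)$ (and to $L^\infty$ of the far interior with norm $O(\varepsilon^2)$), the contribution to $\alpha Tu_\varepsilon(x)$ at distance $\geq c$ from $\Gamma_\varepsilon$ is $O(\alpha\varepsilon)$, which is stronger than you need. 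The paper avoids this issue altogether by solving explicitly for $\tilde\phi_\varepsilon$ to two orders in \eqnref{tphi1} before inserting into the interior representation.
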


Proof. By integrating the first equation in \eqnref{NR_eq} over $\Omega_h$ using the divergence theorem we get the compatibility condition
\beq
\int_{\Gamma_{\varepsilon}}\frac{\partial u_{\varepsilon}}{\partial \nu}d\sigma=-|\Omega_h|.
\label{compatibility}
\end{equation}

Let us define $g(x)$ by
$$g(x)=\int_{\Omega_h}N(x,z)dz,\quad x\in\Omega_h,$$\\
which is seen to solve the boundary value problem
\beq \label{g_eq}
\begin{cases}
\Delta g=-1,& \mbox{in}~\Omega_h,\\[1ex]
\displaystyle\frac{\partial g}{\partial\nu}=-\frac{|\Omega_h|}{|\partial \Omega_h|}, & \mbox{on}~ \partial \Omega_h,\\[1em]
\displaystyle \int_{\partial \Omega_h} g d\sigma=0.&
\end{cases}
\end{equation}
Applying the Green's formula and using \eqnref{NR_eq} and \eqnref{g_eq}, we obtain
\beq \label{u_form1}
u_{\varepsilon}(x)=g(x) + \int_{\Gamma_{\varepsilon}}N_{\partial\Omega_h}(x,z)\frac{\partial u_{\varepsilon}(z)}{\partial \nu_{z}}d\sigma(z)
+C_{\varepsilon},
\eeq
where $$C_{\varepsilon}=\frac{1}{|\partial \Omega_h|}\int_{\partial \Omega_h}u_{\varepsilon}(z)d\sigma(z).$$

Let $x\in\Gamma_{\varepsilon}$. Substitute the Robin boundary condition and the structure of Neumann function (\ref{NLN}) into (\ref{u_form1}), we obtain
     \begin{equation*}
       \frac{\beta}{\alpha}-\frac{1}{\alpha}\phi_{\varepsilon}(x)=g(x)+\frac{1}{2\pi}\int_{\Gamma_{\varepsilon}}\frac{1}{|x-z|}\phi_{\varepsilon}(z)d\sigma(z)
    + \int_{\Gamma_{\varepsilon}}v_{\partial\Omega_h}(x,z)\phi_{\varepsilon}(z)d\sigma(z)+C_{\varepsilon}, \quad x \in \Gamma_\varepsilon,
      \end{equation*}
where $\phi_{\varepsilon}(x)=\partial u_{\varepsilon}(x)/\partial \nu_{x}$. Note that the mean curvature $H(z)=0$ for $z \in \Gamma_\varepsilon$ since $\Gamma_\varepsilon$ is assumed to be flat.

By a simple change of variables, the above equation can be written as
     \beq
     \begin{split}
       \frac{\beta}{\alpha}-\frac{1}{\alpha\varepsilon}\tilde{\phi}_{\varepsilon}(x)=g(\varepsilon x)+\frac{1}{2\pi}\int_{\Gamma_1}\frac{1}{|x-z|}\tilde{\phi}_{\varepsilon}(z)d\sigma(z)+\varepsilon \int_{\Gamma_{1}}v_{\partial\Omega_h}(\varepsilon x,\varepsilon z)\tilde{\phi}_{\varepsilon}(z)d\sigma(z)+C_{\varepsilon},
      \end{split}
      \label{LL}
      \eeq
      where $\Gamma_1=\{x/\varepsilon: x\in\Gamma_{\varepsilon}\}$, and $\tilde{\phi}_{\varepsilon}(x)=\varepsilon\phi_{\varepsilon}(\varepsilon x)$, $x\in\Gamma_1$.

Define two integral operators $L,~ L_1: L^\infty(\Gamma_1)\rightarrow L^\infty(\Gamma_1)$ by
\begin{eqnarray*}
&L[\varphi](x)&=\frac{1}{2\pi}\int_{\Gamma_1}\frac{1}{|x-z|}\varphi(z)dz,\\
&L_1[\varphi](x)&=\int_{\Gamma_1}v_{\partial\Omega_h}(\varepsilon x,\varepsilon z)\varphi( z)dz.
\end{eqnarray*}
Since $v_{\partial\Omega}$ is bounded, one can easily see that $L_{1}$ is bounded independently of $\varepsilon$. The integral operator $L$ is a also bounded (see the proof in Appendix A).

So we can write \eqnref{LL} as
$$\frac{\beta}{\alpha}-\frac{1}{\alpha\varepsilon}\tilde{\phi}_{\varepsilon}(x)=g(\varepsilon x)+(L+\varepsilon L_1)\tilde{\phi}_{\varepsilon}(x)+C_{\varepsilon}.$$
Collecting $\tilde{\phi}_{\varepsilon}$ terms, we have
\beq
\big[I+\alpha\varepsilon(L+\varepsilon L_1)\big]\tilde{\phi}_{\varepsilon}(x)=\varepsilon\alpha\left(\frac{\beta}{\alpha}-g(\varepsilon x)-C_{\varepsilon}\right).
\label{tphi}
\end{equation}
Assume here $\alpha< \alpha_0$ and $\alpha \varepsilon \ll 1$. It is easy to see that
$$\left(I+\alpha\varepsilon (L+ \varepsilon L_1)\right)^{-1}=I-\alpha\varepsilon (L+ \varepsilon L_1)+O(\alpha^2\varepsilon^2).$$
Noting that $g$ is $C^1$ in $\bar{\Omega}_h$, and $g(\varepsilon x)=g(x^*)(1+O(\varepsilon))$ on $\Gamma_1$, we have from (\ref{tphi}) that
\begin{equation*}
\tilde{\phi}_{\varepsilon}(x)=\varepsilon\alpha\left[I-\alpha\varepsilon (L+ \varepsilon L_1)+O(\alpha^2\varepsilon^2)\right](\tilde{C}_{\varepsilon}+O(\varepsilon)).
\end{equation*}
where
\beq
\tilde{C}_{\varepsilon}:=\frac{\beta}{\alpha}-g(x^*)-C_{\varepsilon}.
\label{cc}
\eeq
By the compatibility condition (\ref{compatibility}), we can see that $\tilde{C}_\varepsilon=O((\alpha\varepsilon^2)^{-1})$. Then collecting terms we have
\begin{equation}\label{}
\tilde{\phi}_{\varepsilon}(x)=\varepsilon\alpha\tilde{C}_{\varepsilon}-
(\varepsilon\alpha)^2\tilde{C}_{\varepsilon}(L[1]+\varepsilon L_1[1])
+O(\alpha\varepsilon^2).
\label{tphi1}
\end{equation}

Plug (\ref{tphi1}) into the compatibility condition (\ref{compatibility}), we obtain
\begin{equation*}
\pi\alpha\varepsilon^2\tilde{C}_\varepsilon-\alpha^2\varepsilon^3\tilde{C}_\varepsilon\int_{\Gamma_1}(L[1]+\varepsilon L_1[1])(x)=-|\Omega_h|+O(\alpha\varepsilon^3),
\end{equation*}
which implies
\begin{equation*}
\begin{split}
\tilde{C}_\varepsilon&=\left(I+\frac{\alpha\varepsilon}{\pi}\int_{\Gamma_1}(L[1]+\varepsilon L_1[1])(x)+O(\alpha^2\varepsilon^2)\right)\left(-\frac{|\Omega_h|}{\pi\alpha\varepsilon^2}+O(\varepsilon)\right)\\
&=-\frac{|\Omega_h|}{\pi\alpha\varepsilon^2}-\frac{|\Omega_h|}{\pi^2\varepsilon}M
-|\Omega_h|v_{\partial\Omega}(x^*,x^*)+O(\alpha),
\end{split}
\end{equation*}
where $M=\int_{\Gamma_1}L[1]dx$, and $\int_{\Gamma_1}L_1[1](x)=\pi^2v_{\partial\Omega}(x^*,x^*)+O(\varepsilon)$.
Hence from (\ref{cc}), we have
\beq
C_\varepsilon=\frac{|\Omega_h|}{\pi\alpha\varepsilon^2}+\frac{|\Omega_h|}{\pi^2\varepsilon}M
+|\Omega_h|v_{\partial\Omega}(x^*,x^*)+\frac{\beta}{\alpha}-g(x^*)+O(\alpha).
\label{c}
\eeq

Substitute $\tilde{C_\varepsilon}$ into \eqnref{tphi1}, we have
\begin{equation*}
\tilde{\phi}_{\varepsilon}(x)=-\frac{|\Omega_h|}{\pi\varepsilon}-\frac{\alpha|\Omega_h|}{\pi^2}(M-\pi L[1])+O(\alpha^2\varepsilon),
\end{equation*}
and hence
\begin{align}\label{eq:3}
\frac{\partial u_{\varepsilon}(x)}{\partial \nu_{x}}=\phi_{\varepsilon}(x)=\frac{1}{\varepsilon}\tilde{\phi}_{\varepsilon} \left( \frac{x}{\varepsilon} \right)
=-\frac{|\Omega_h|}{\pi\varepsilon^2}-\frac{\alpha|\Omega_h|}{\pi^2\varepsilon} \left[ M-\pi L[1] \left( \frac{x}{\varepsilon} \right) \right] +O(\alpha^2).
\end{align}

In order to obtain the solution to (\ref{NR_eq}), it remains to calculate the second term in (\ref{u_form1}). Combining \eqref{eq:4}, \eqref{NLN} and \eqref{eq:3} yields
\begin{align}
&\int_{\Gamma_{\varepsilon}}N_{\partial \Omega_h}(x,z)\frac{\partial u_{\varepsilon}(z)}{\partial \nu_{z}}d\sigma(z)\nonumber\\[1ex]
&=\int_{\Gamma_{\varepsilon}}N_{\partial \Omega_h}(x,z)\left[ -\frac{|\Omega_h|}{\pi\varepsilon^2}
-\frac{\alpha|\Omega_h|}{\pi^2\varepsilon} \left( M-\pi L[1] \left( \frac{x}{\varepsilon} \right)+O(\alpha^2)\right) \right] \nonumber d\sigma(z)\\[1ex]
&=-|\Omega_h|N_{\partial \Omega_h}(x, x^*)+O(\alpha \varepsilon) \notag \\[1ex]
&= - \frac{|\Omega_h|}{2\pi|x-x^*|} + |\Omega_h| v_{\partial\Omega_h} (x,x^*).
\label{phi_exp}
\end{align}
provided that ${\rm dist}(x, \Gamma_{\varepsilon}) \geq c$ for some constant $c>0$.

Finally, combining(\ref{u_form1}), (\ref{c}) and (\ref{phi_exp}), we obtain
\begin{equation*}
u_{\varepsilon}(x)=\frac{|\Omega_h|}{\pi\alpha\varepsilon^2}
+\frac{|\Omega_h| M}{\pi^2\varepsilon}+\frac{\beta}{\alpha}-\frac{|\Omega_h|}{2\pi|x-x^*|}+\Phi(x)+O(\alpha),
\end{equation*}
where
\begin{align*}
\Phi(x)=g(x)-g(x^*)+|\Omega_h|\left[ v_{\partial\Omega_h}(x^*,x^*)-v_{\partial\Omega_h}(x,x^*) \right]
\label{phii}
\end{align*}
is a bounded function depending only on $\Omega_h$.

\section{Application to the narrow escape problem in dendritic spine}
In this section we use the asymptotic solution to the Robin-Neumann model to approximate the calcium ion diffusion time in a dendritic spine domain. Without loss of generality, let $\Omega_n$ be placed along the $x_1$-axis with $\Gamma_\varepsilon$ at $x_1=0$ and $\partial\Omega_a$ at $x_1=L$. Since $\varepsilon \ll 1$, we assume $u(x)$ is constant in each cross section of $\Omega_n$ and thus is a function of $x_1$ only. The three-dimensional problem \eqref{main_equation} restricted in $\Omega_n$ is then approximated by the one-dimensional problem
\begin{equation*}
  \begin{cases}
    \dfrac{d^2 u}{dx_1^2}= -1, & 0<x_1<L, \\[1ex]
    u = 0, & x_1=L.
  \end{cases}
\end{equation*}
By direct calculation we obtain the solution to the above problem as
\begin{equation*}
  u(x_1) = -\dfrac{1}{2} x_1^2 + C x_1 + \dfrac{1}{2} L^2 - CL, \quad 0<x_1<L,
\end{equation*}
where $C$ is a constant. Solution $u$ satisfies the Robin boundary condition at $x=0$. Evaluating $u$ and $du/dx$ at $x_1=0$ yields the Robin condition
\begin{equation*}
  \frac{du}{dx_1}(0) + \frac{1}{L} u(0) = \frac{L}{2}.
\end{equation*}
By the continuity of $u$ and $\partial u / \partial \nu$ on $\Gamma_\varepsilon$, we obtain the Robin-Neumann boundary value problem in $\Omega_h$:
\begin{equation*}
  \begin{cases}
    \Delta u_{\varepsilon}=-1, &\mbox{in}~\Omega_{h},\\[1ex]
    \dfrac{\partial u_{\varepsilon}}{\partial \nu}=0, &\mbox{on}~\partial \Omega_{r},\\[1em]
    \dfrac{\partial u_{\varepsilon}}{\partial \nu}+\alpha u_\varepsilon = \beta, &\mbox{on}~\Gamma_{\varepsilon}.
  \end{cases}
  \label{cal}
\end{equation*}
where $\alpha=1/L$ and $\beta=L/2$. Applying Theorem \ref{thm:main} we obtain the asymptotic expansion of $u_\varepsilon$ as
\begin{align}  \label{eq:5}
  u_{\varepsilon}(x) \approx \frac{|\Omega_h|L}{\pi\varepsilon^2} +\frac{|\Omega_h| M}{\pi^2\varepsilon}+\frac{L^2}{2}-\frac{|\Omega_h|}{2\pi|x-x^*|},
\end{align}
Note that the leading term coincides with \eqref{eq:1}, which is obtained in \cite{ania, michael} using numerical simulation.

\subsection{Numerical experiments}
In the rest of this section we shall conduct numerical experiments to verify the asymptotic expansion \eqref{eq:5}. We shall compare the asymptotic solution \eqref{eq:5} with the solution $u$ to the original problem \eqref{main_equation} obtained numerically with the finite element method. We shall also confirm the coefficients in the first two terms in \eqref{eq:5}.

For simplicity we confine ourself to the case when the connection part $\Gamma_\varepsilon$ is a disk of radius $\varepsilon$. In this case the constant $M$ in $\eqref{eq:2}$ has an explicit and elegant value $M=8/3$ (see Appendix B). The spine neck $\Omega_n$ is chosen to be a cylinder with radius $\varepsilon$ and length $L$, and whose axis is parallel to the normal of $\Gamma_\varepsilon$. To nondimensionalize our problem, the numerical results are regarded as using consistent units throughout this section.
\begin{figure}
\centering
\begin{subfigure}[b]{0.31\textwidth}
    \includegraphics[width=\textwidth]{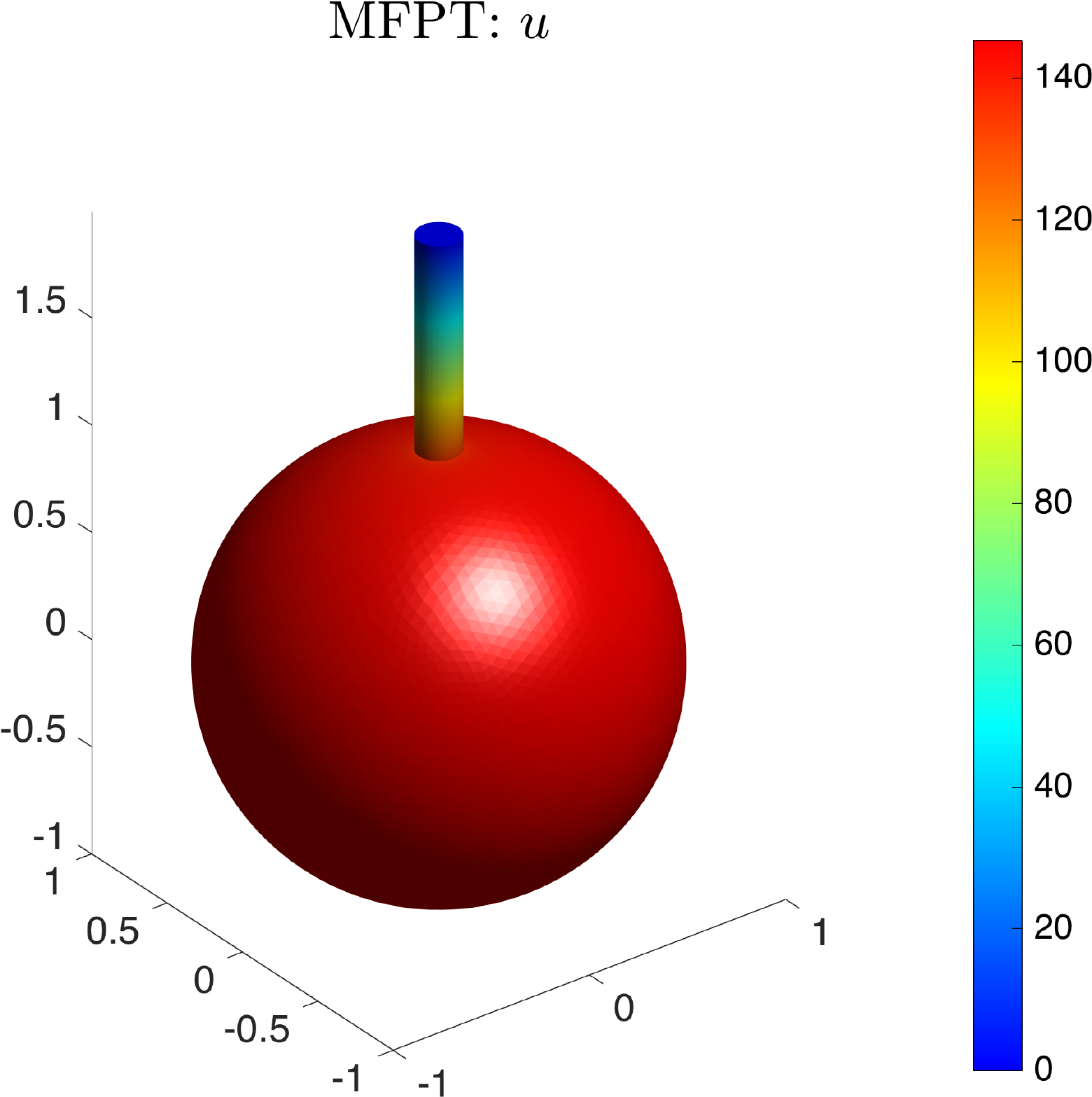}
    \caption{}
    \end{subfigure}
    \begin{subfigure}[b]{0.3\textwidth}
    \includegraphics[width=\textwidth]{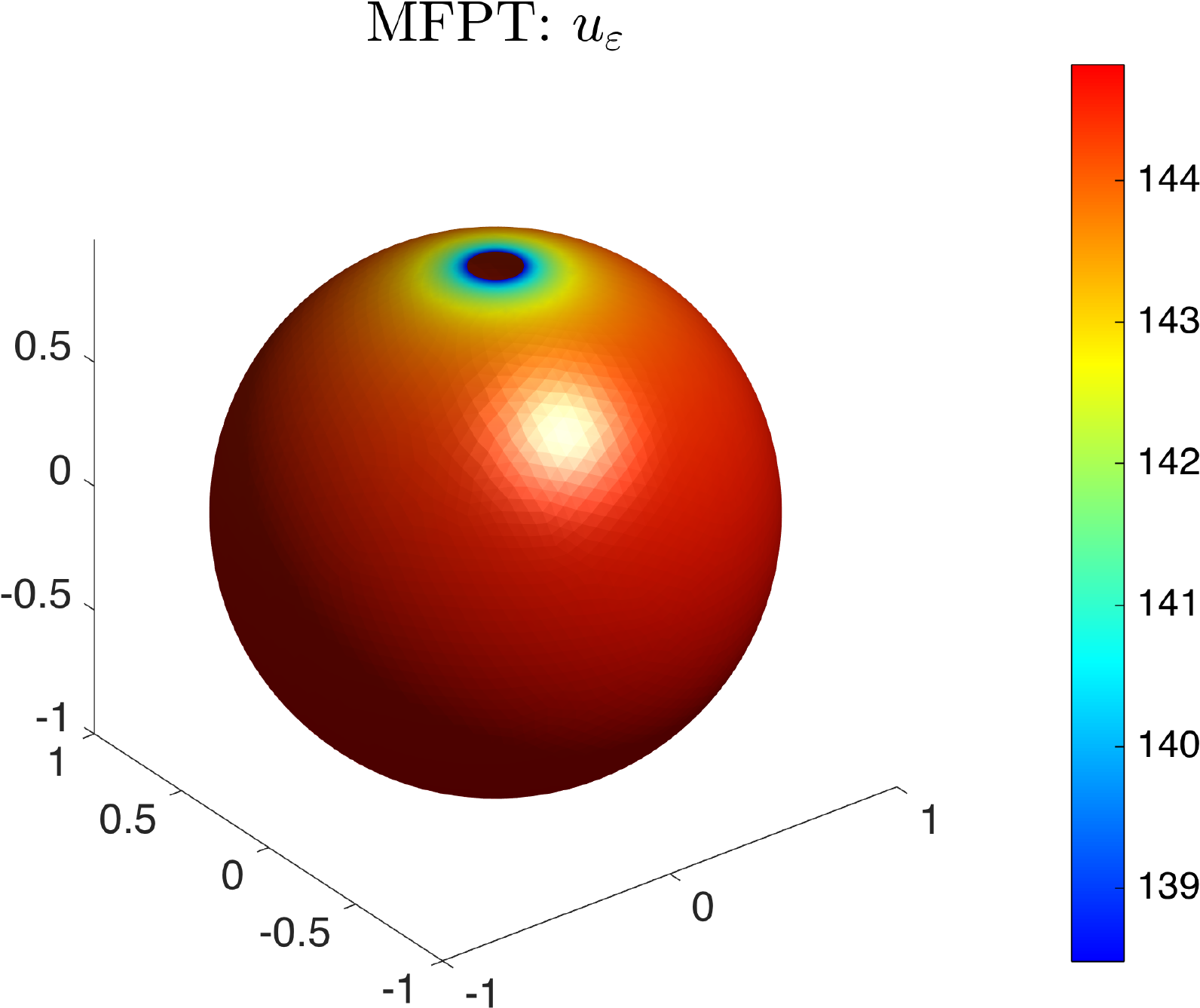}
    \caption{}
    \end{subfigure}
    \begin{subfigure}[b]{0.3\textwidth}
     \includegraphics[width=\textwidth]{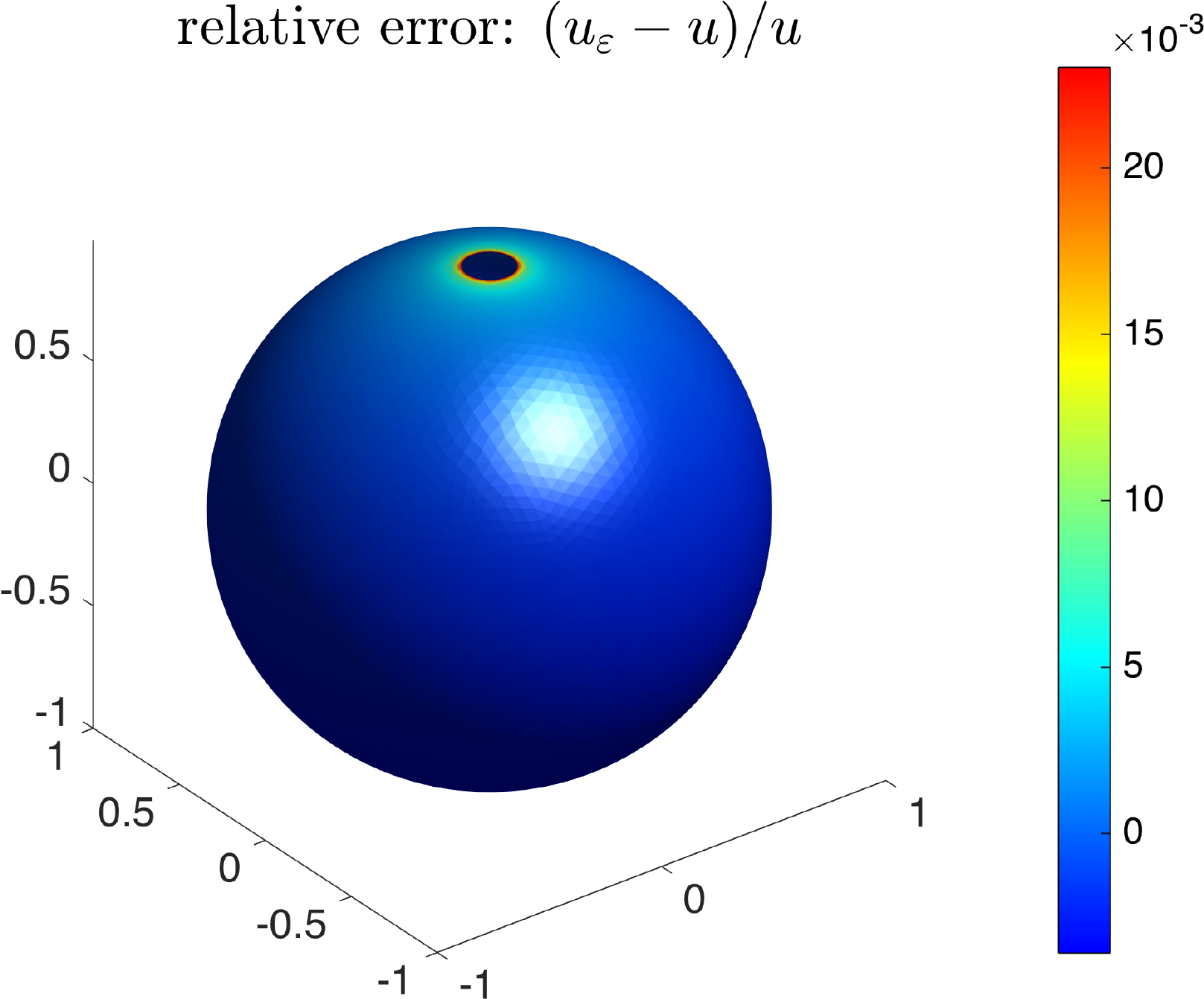}
     \caption{}
     \end{subfigure}
    \caption{(a) numerical solution $u$ on the surface of the spine; (b) asymptotic solution $u_\varepsilon$ on the surface of the spine head; (c) relative error between $u_{\varepsilon}$ and $u$ on the surface of the spine head.}
    \label{com01}
    \end{figure}

For the first experiment, we select the spine head as an unit ball and set $\varepsilon=0.1, L=1.0$. In Figure~\ref{com01}(a) we plot the numerical solution of $u$ on $\partial\Omega_h$, i.e. the whole surface of the spine. Observe that the MFPT $u$ is relatively large in the spine head and decreases monotonically to zero towards the end of the spine neck. This is consistent with our intuition about the underlying physical process. In Figure \ref{com01}(b) we plot the asymptotic solution $u_\varepsilon$ according to \eqref{eq:5} on $\partial\Omega_h$, i.e. the surface of the spine head. Note that $u$ is relatively constant but is smaller for points closer to the connection part, which is also consistent with the physical intuition. In Figure \ref{com01}(c) we plot the relative error between $u$ and $u_\varepsilon$, computed as $(u_\varepsilon-u)/u$, on the surface of the spine head. The maximal relative error is seen to be about $0.2\%$.

For the next experiment, we fix the neck length $L=1.0$ and let the neck radius $\varepsilon$ decreases from $0.1$ to $0.01$ in a step size of $0.01$. For each value of $\varepsilon$, we compare the value of the numerical solution $u$ to the original problem \eqref{main_equation}, the numerical solution of the Robin-Neumann model \eqref{NR_eq}, which is denoted by $u_r$, and the asymptotic solution $u_\varepsilon$ given in \eqref{eq:5}. In Table \ref{tab:epsilon} we list the value of $u,u_r,u_\varepsilon$, as well as the relative error $(u_\varepsilon-u)/u$, at the center of the spine head. Clearly we have a good match of the solutions and small relative error for all values of $\varepsilon$.
\begin{table}
\centering
\begin{tabular}{rrrrr}
  \toprule
  \multicolumn{1}{c}{$\varepsilon$}
  &
    \multicolumn{1}{c}{$u$}
  &
    \multicolumn{1}{c}{$u_r$}
  &
    \multicolumn{1}{c}{$u_{\varepsilon}$}
  &
    \multicolumn{1}{c}{$(u_{\varepsilon}-u)/u$}
  \tabularnewline
  \midrule
  0.10 &   145.01 &   145.37 &   144.48 &  -0.0036 \tabularnewline
  0.09 &   177.53 &   177.45 &   177.01 &  -0.0029 \tabularnewline
  0.08 &   222.81 &   223.28 &   222.31 &  -0.0022 \tabularnewline
  0.07 &   288.57 &   289.11 &   288.11 &  -0.0016 \tabularnewline
  0.06 &   389.47 &   390.12 &   389.07 &  -0.0010 \tabularnewline
  0.05 &   556.12 &   556.91 &   555.80 &  -0.0006 \tabularnewline
  0.04 &   861.64 &   862.63 &   861.46 &  -0.0002 \tabularnewline
  0.03 &  1518.90 &  1520.30 &  1519.04 &   0.0001 \tabularnewline
  0.02 &  3389.10 &  3391.10 &  3389.75 &   0.0002 \tabularnewline
  0.01 & 13443.00 & 13456.00 & 13446.34 &   0.0002 \tabularnewline
  \bottomrule
\end{tabular}
\caption{Solutions and relative error for different values of $\varepsilon$.}
\label{tab:epsilon}
\end{table}

The asymptotic solution $u_\varepsilon$ in \eqref{eq:5} at a fixed $x$ can be considered as a quadratic polynomial of $1/\varepsilon$ with leading coefficients $|\Omega_h|L/\pi=4/3$ and $|\Omega_h|M/\pi^2 \approx 1.13$. We now confirm these coefficients by fitting the value of $u$ by a quadratic polynomial of $1/\varepsilon$ in Table \ref{tab:epsilon}. The result is plotted in Figure \ref{fig:fit}. Clearly the coefficients of the fitting polynomial matches well with those of the asymptotic solution.
\begin{figure}[t]
  \centering
  \includegraphics[width=0.6\textwidth]{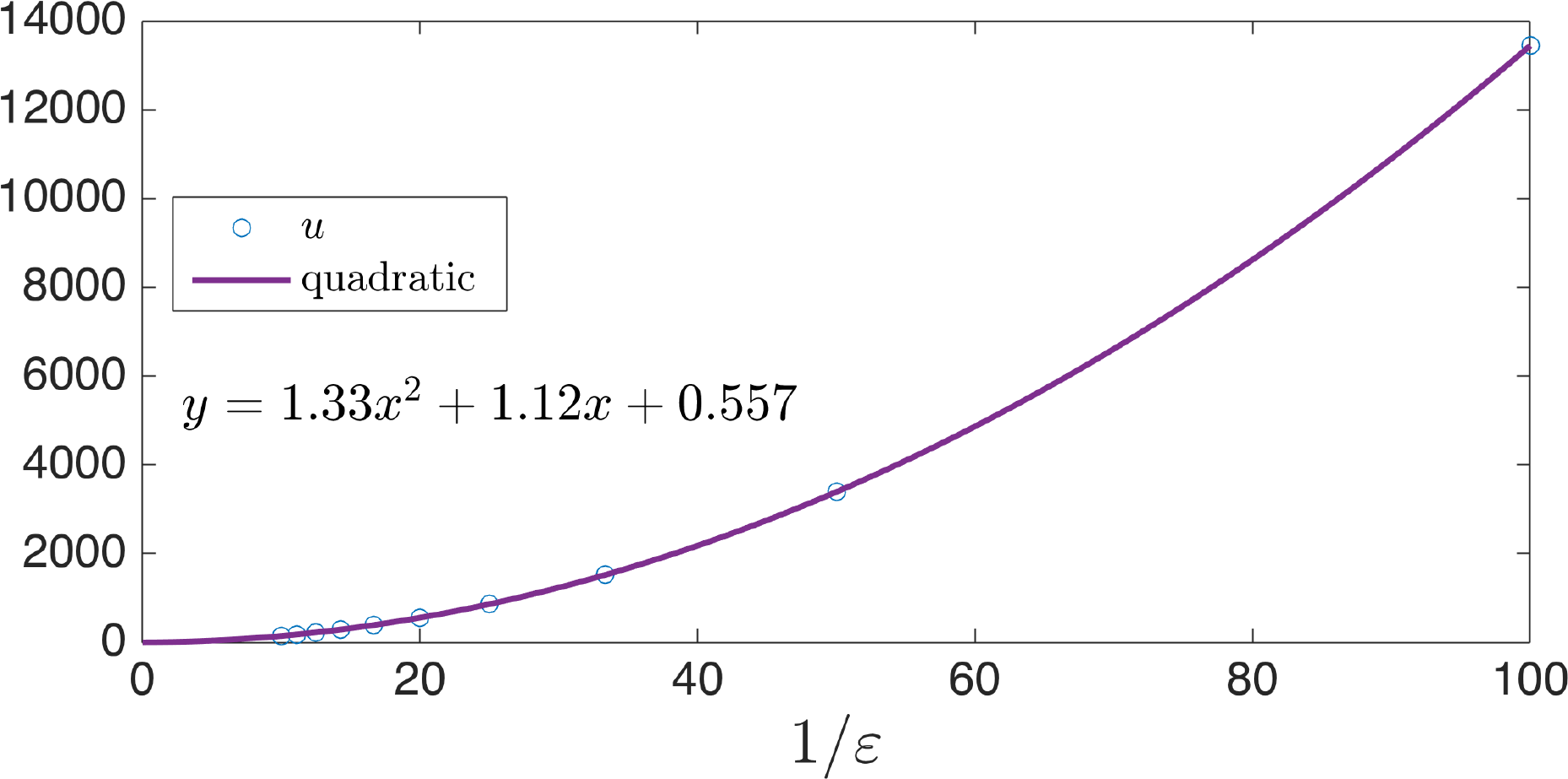}
  \caption{Quadratic fitting of the data in the first two columns of Table \ref{tab:epsilon}.}
  \label{fig:fit}
\end{figure}

Another important parameter in the asymptotic solution \eqref{eq:5} is $L$, the length of the spine neck. For the next experiment, we fix the neck radius $\varepsilon=0.05$ and let the neck radius $L$ increases from $1.0$ to $10.0$ in a step size of $1.0$. In Table \ref{tab:L} we list the value of $u,u_r,u_\varepsilon$, as well as the relative error $(u_\varepsilon-u)/u$, at the center of the spine head. Clearly we have a good match of the solutions and small relative error for each value of $L$.
\begin{table}
  \centering
  \begin{tabular}{rrrrr}
    \toprule
    \multicolumn{1}{c}{$L$}
    &
      $u$
    &
      \multicolumn{1}{c}{$u_r$}
    &
      \multicolumn{1}{c}{$u_{\varepsilon}$}
    &
      \multicolumn{1}{c}{$(u_{\varepsilon}-u)/u$}
    \tabularnewline
    \midrule
    1.0 &   555.98 &   556.91 &   555.80 &  -0.0003 \tabularnewline
    2.0 &  1090.80 &  1091.80 &  1090.63 &  -0.0002 \tabularnewline
    3.0 &  1626.60 &  1627.60 &  1626.47 &  -0.0001 \tabularnewline
    4.0 &  2163.40 &  2164.50 &  2163.30 &  -0.0000 \tabularnewline
    5.0 &  2700.90 &  2702.30 &  2701.13 &   0.0001 \tabularnewline
    6.0 &  3239.90 &  3241.20 &  3239.97 &   0.0000 \tabularnewline
    7.0 &  3779.70 &  3781.10 &  3779.80 &   0.0000 \tabularnewline
    8.0 &  4320.20 &  4321.90 &  4320.63 &   0.0001 \tabularnewline
    9.0 &  4865.10 &  4863.80 &  4862.47 &  -0.0005 \tabularnewline
    10.0 &  5408.30 &  5406.60 &  5405.30 &  -0.0006 \tabularnewline
    \bottomrule
  \end{tabular}
  \caption{Solutions and relative error for different values of $L$.}
  \label{tab:L}
\end{table}

Finally we conduct numerical experiments on three different shapes of the spine head, which may correspond to different types of spine. The numerical solution $u$, the asymptotic solution $u_\varepsilon$ and the relative error between them are shown in  Figure \ref{fig:three}. The first row shows the results when the spine head is a relatively flat three dimensional domain. The second one is a relatively thin domain and the third one is a non-convex domain. From the relative error $(u_\varepsilon-u)/u$, which is described in the third column, we can see that the relative error is small enough to show that asymptotic formula $u_\varepsilon$ is a good approximation to the MFPT.
\begin{figure}
\centering
\begin{subfigure}[b]{0.31\textwidth}
    \includegraphics[width=\textwidth]{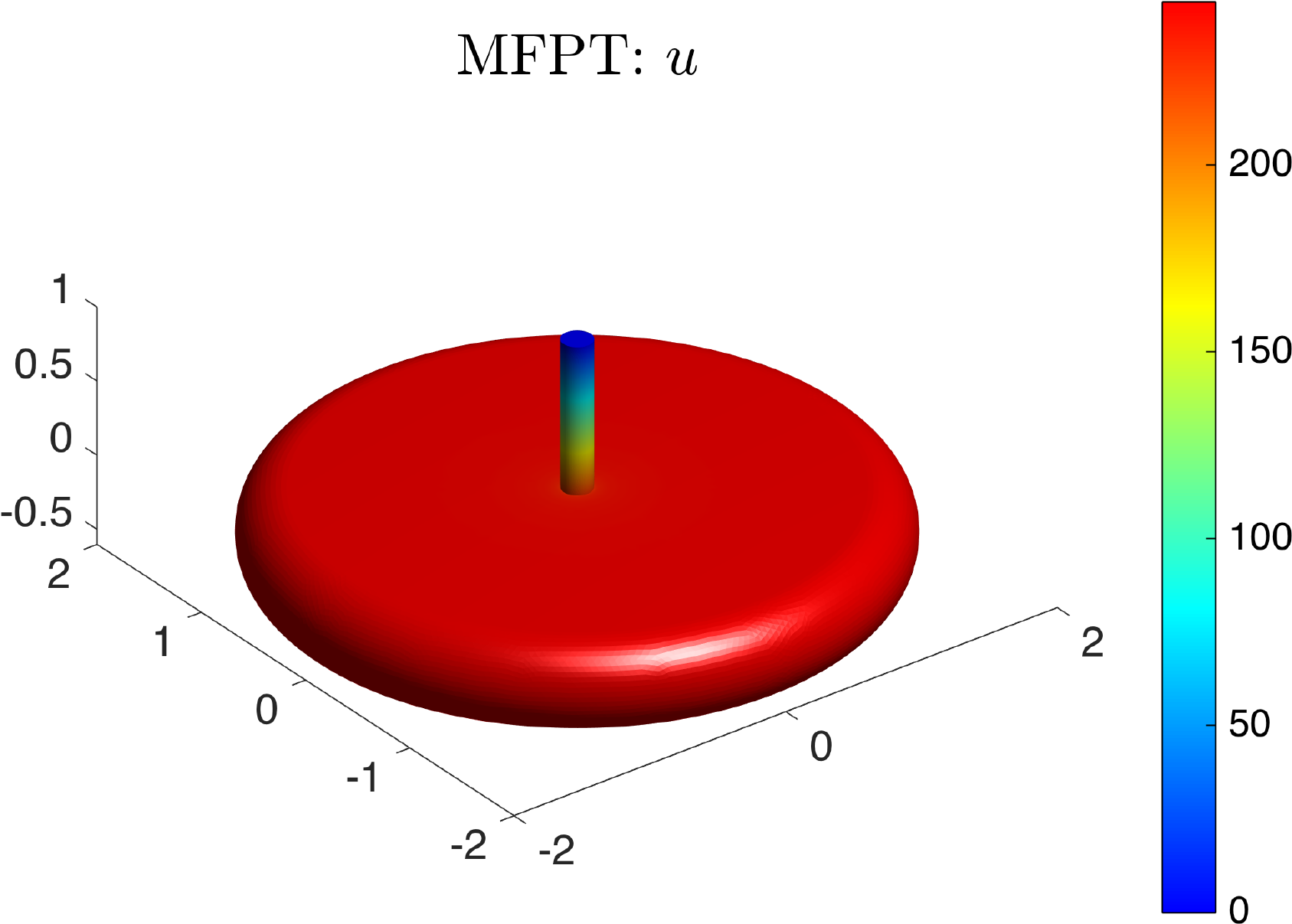}
    \caption{}
    \end{subfigure}
    \begin{subfigure}[b]{0.3\textwidth}
    \includegraphics[width=\textwidth]{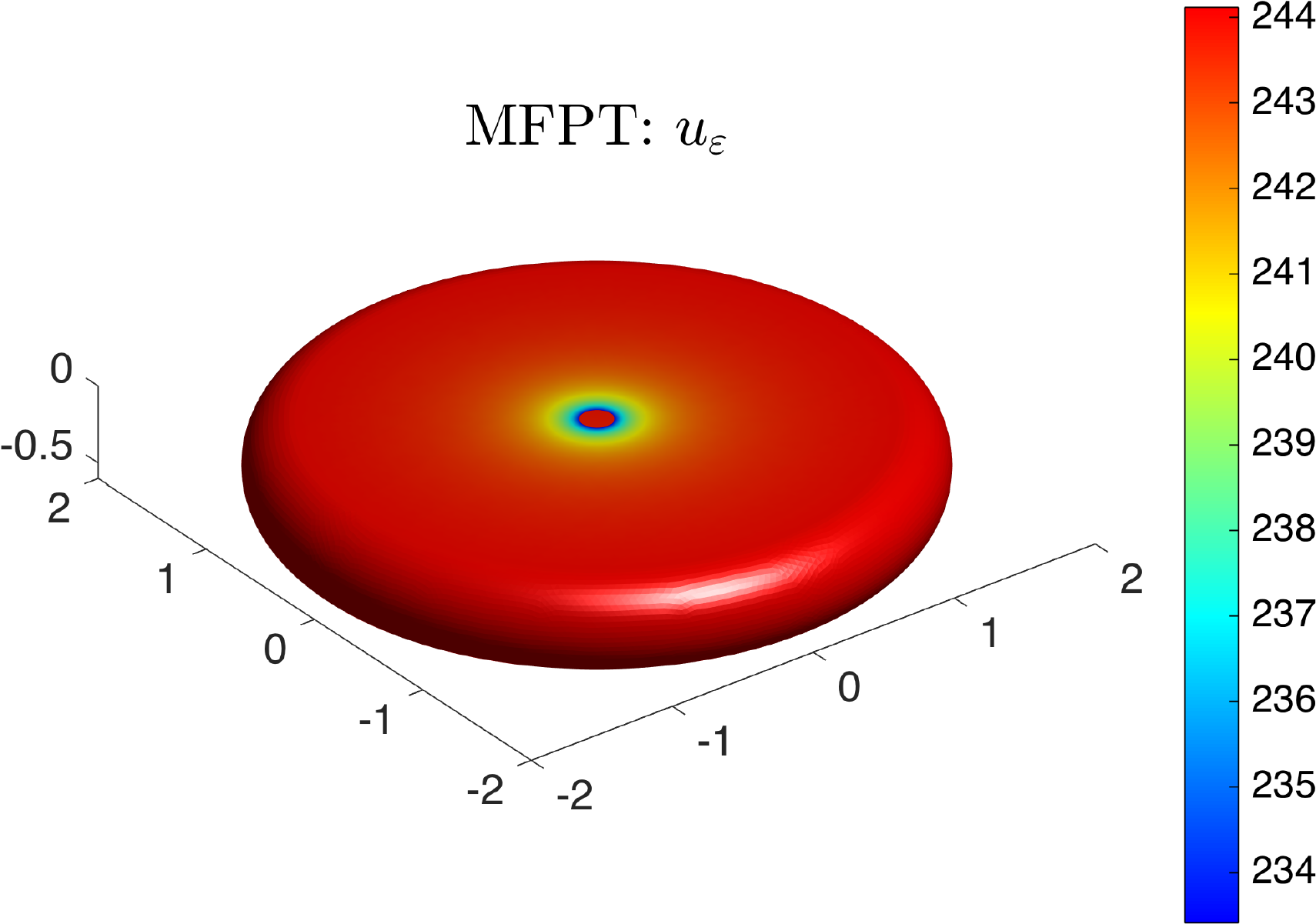}
    \caption{}
    \end{subfigure}
     \begin{subfigure}[b]{0.3\textwidth}
    \includegraphics[width=\textwidth]{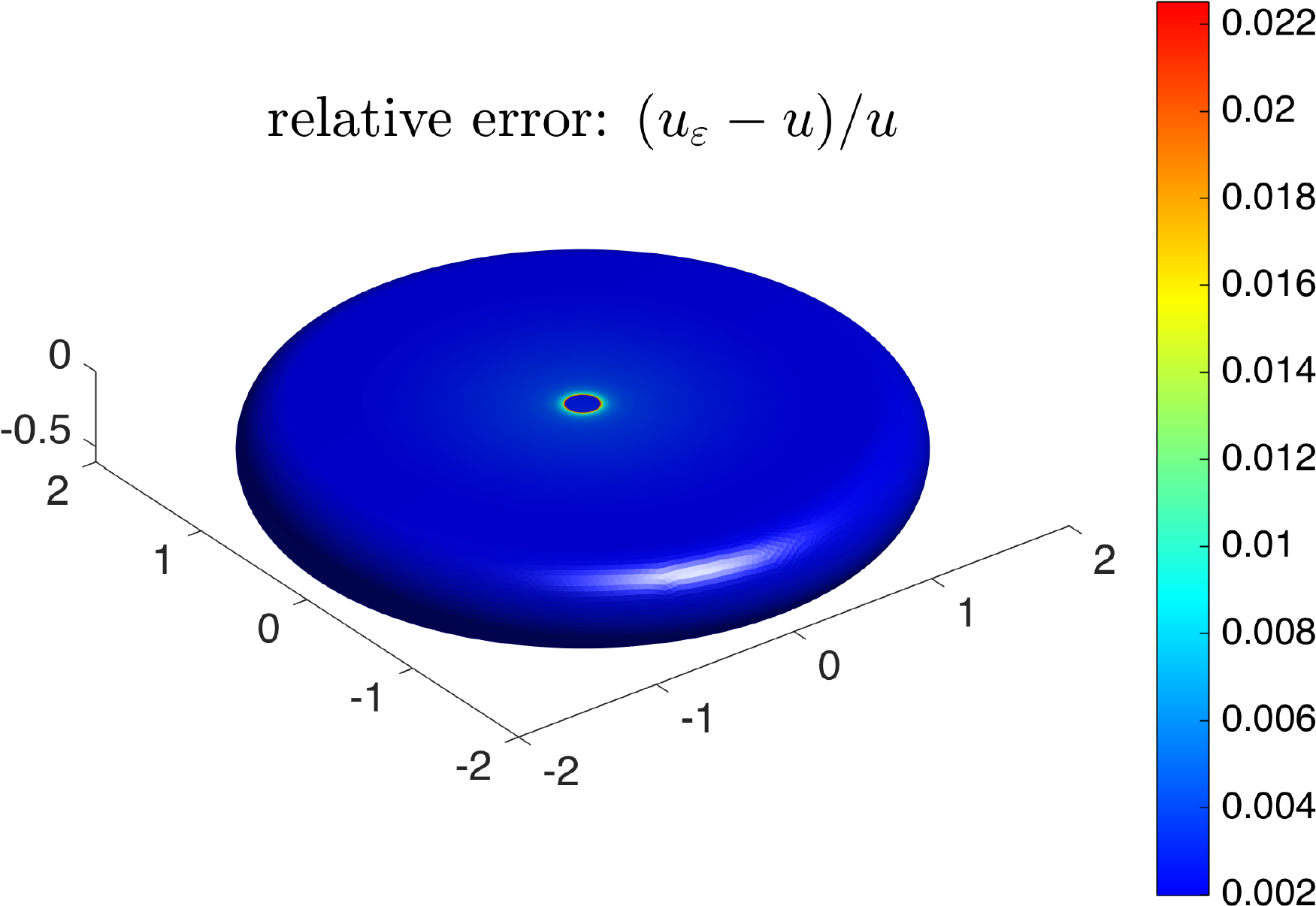}
    \caption{}
    \end{subfigure}
    \begin{subfigure}[b]{0.33\textwidth}
     \includegraphics[width=\textwidth]{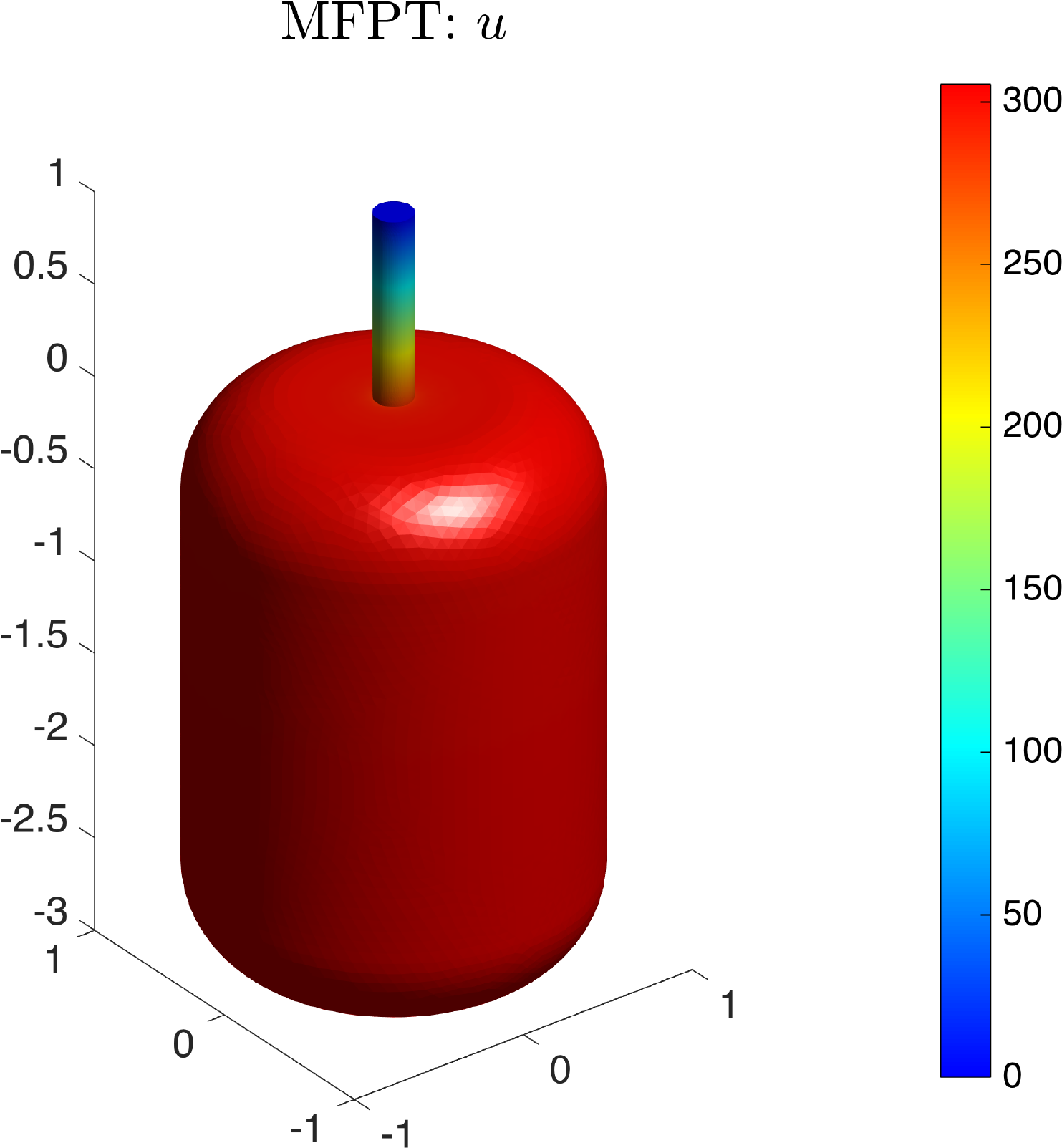}
     \caption{}
     \end{subfigure}
     \begin{subfigure}[b]{0.3\textwidth}
      \includegraphics[width=\textwidth]{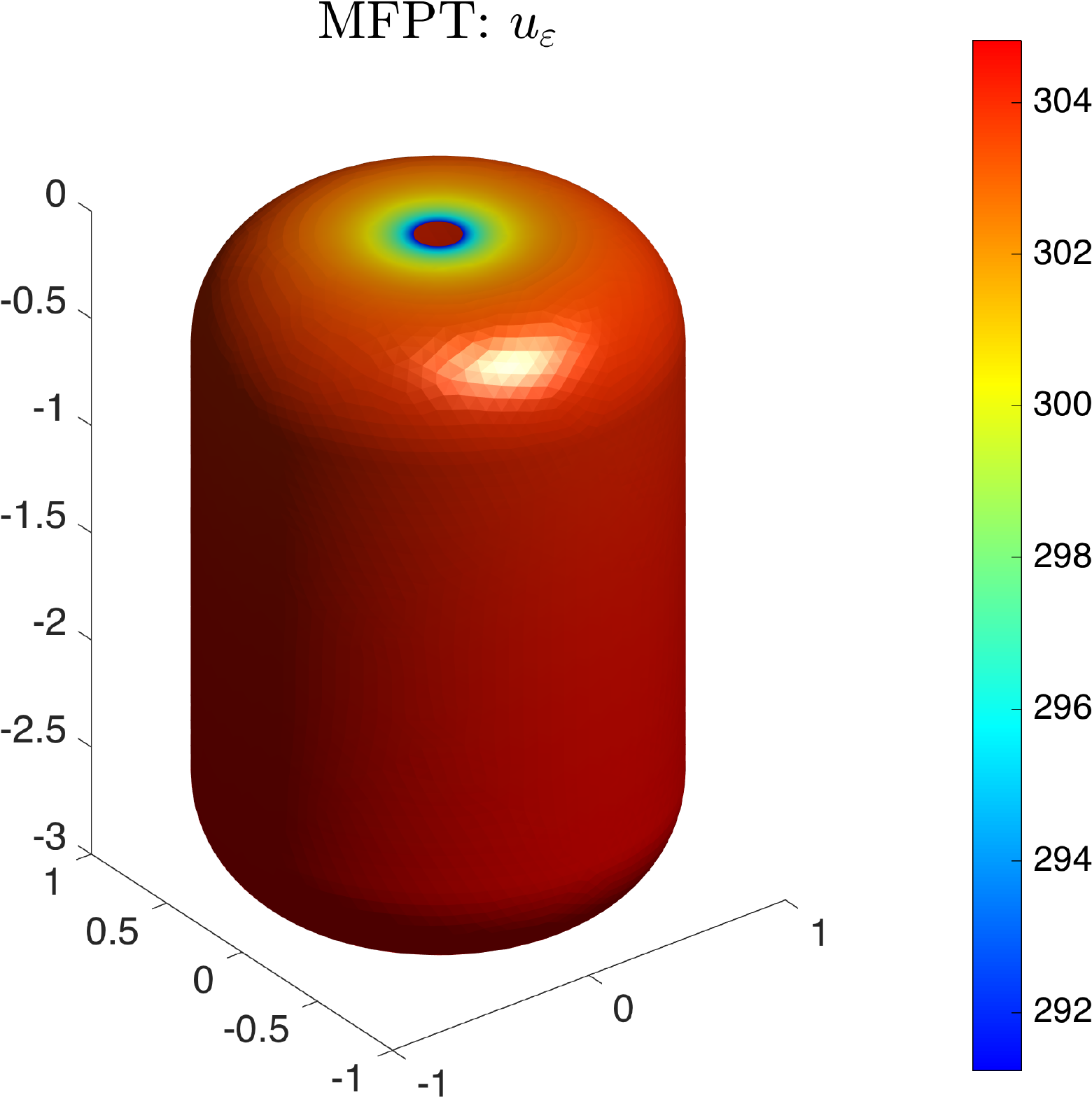}
      \caption{}
      \end{subfigure}
       \begin{subfigure}[b]{0.3\textwidth}
    \includegraphics[width=\textwidth]{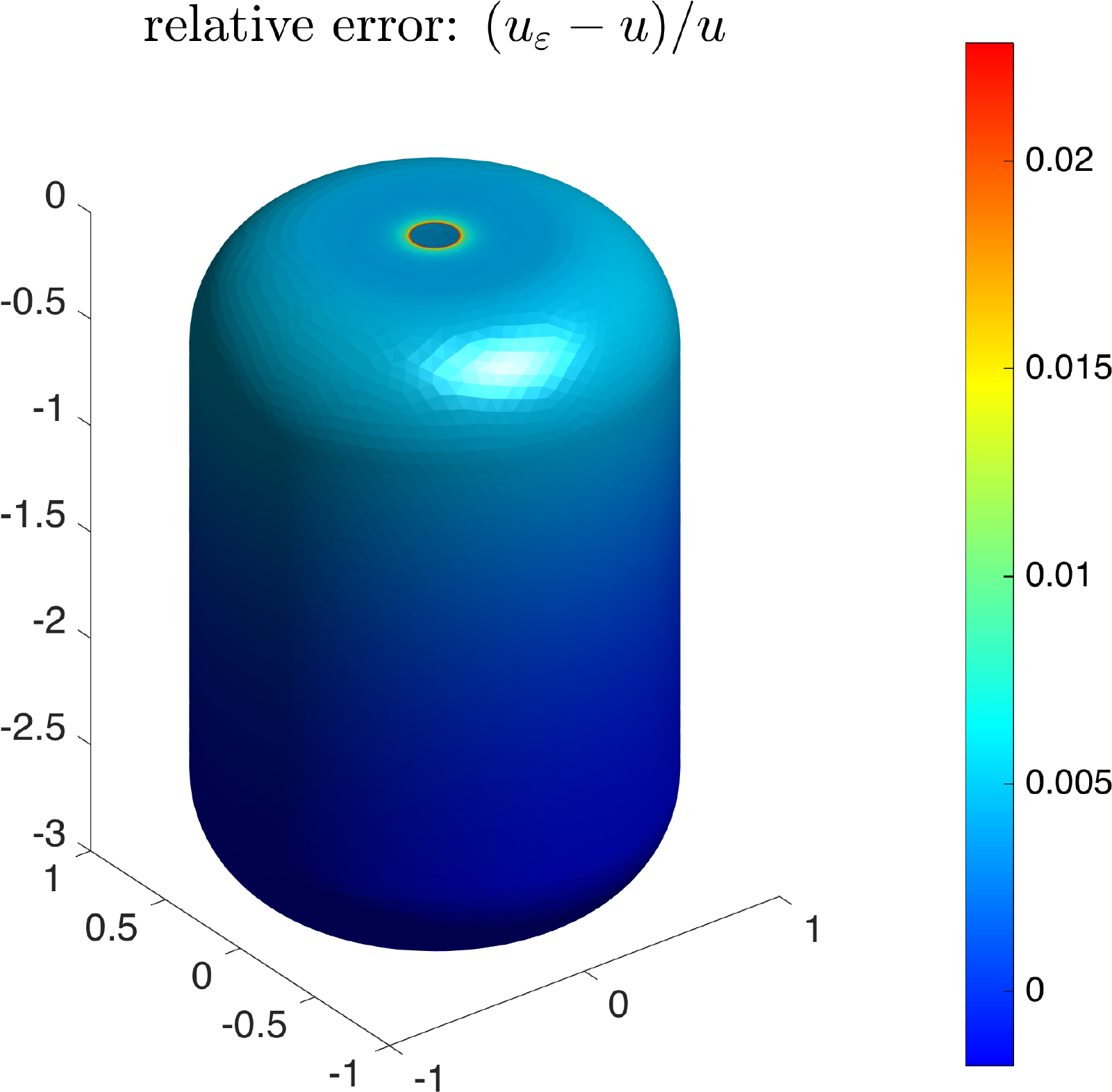}
    \caption{}
    \end{subfigure}
     \begin{subfigure}[b]{0.31\textwidth}
    \includegraphics[width=\textwidth]{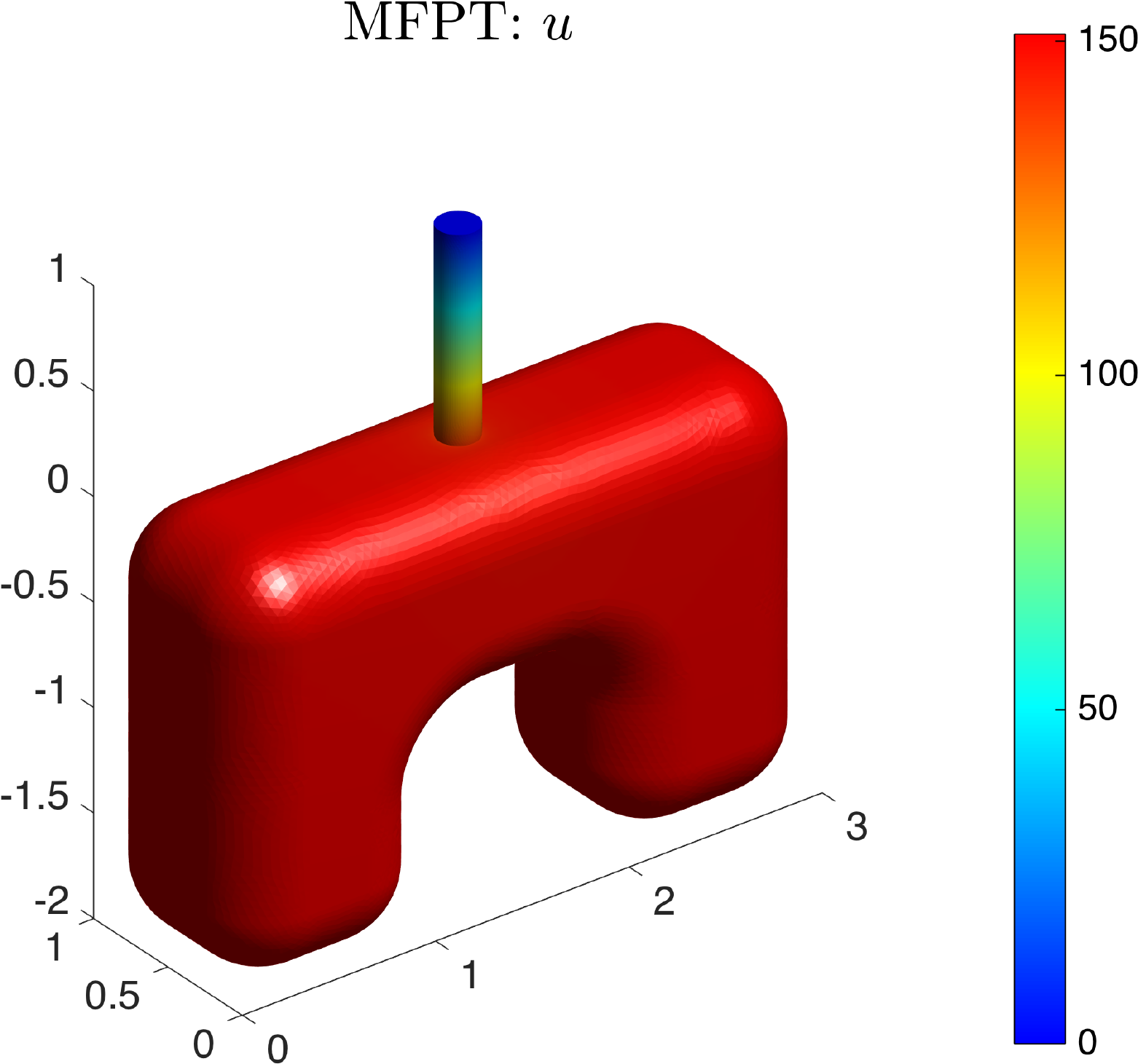}
    \caption{}
    \end{subfigure}
     \begin{subfigure}[b]{0.3\textwidth}
    \includegraphics[width=\textwidth]{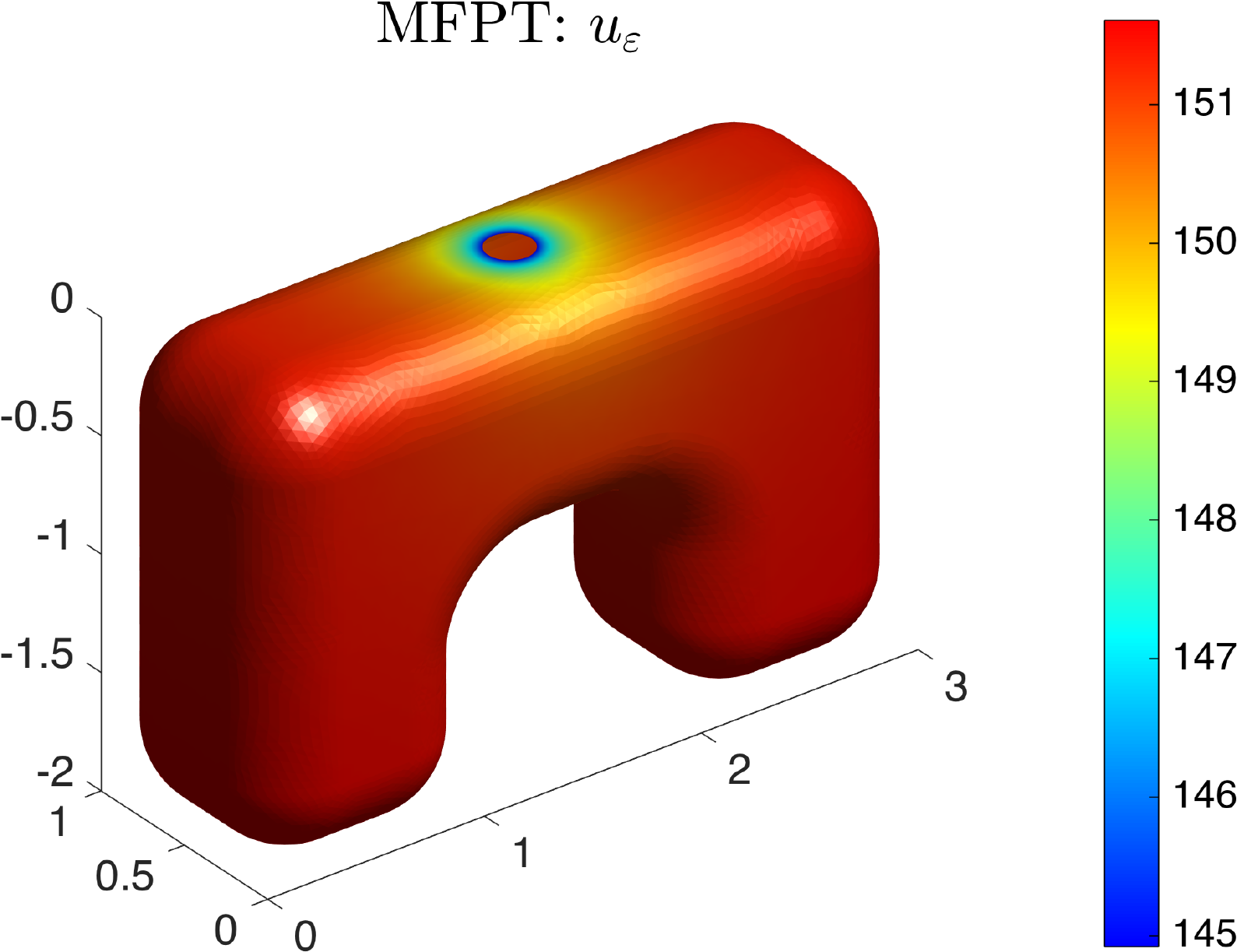}
    \caption{}
    \end{subfigure}
     \begin{subfigure}[b]{0.3\textwidth}
    \includegraphics[width=\textwidth]{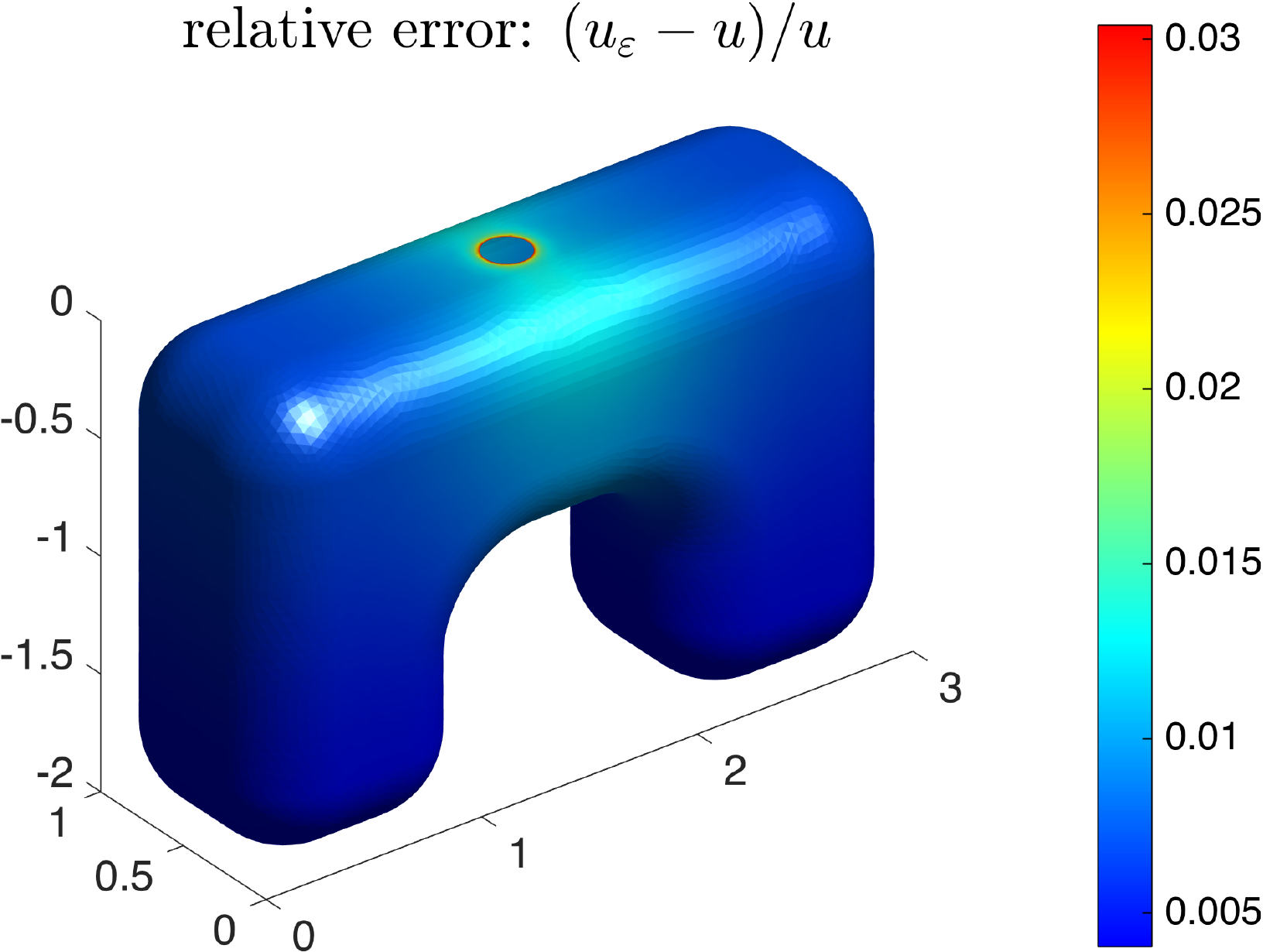}
    \caption{}
    \end{subfigure}
    \caption{The numerical solution $u$, the asymptotic solution $u_\varepsilon$ and the relative error $(u_\varepsilon-u)/u$ for three different types of spine.}
    \label{fig:three}
    \end{figure}

\section{Conclusion}
In this study, we used the Robin-Neumann model to compute the NEP in a three-dimensional
domain with a long neck, which is often referred as a dendritic spine domain. This is a following up paper of \cite{Li}, where the Robin-Neumann model is presented to solve the NEP in a two-dimensional analogue of a dendritic
spine domain. In this paper, we derived asymptotic expansion formula for three-dimensional Robin-Neumann model. Our results demonstrate that this asymptotic expansion formula could approximate the MFPT up to at least second leading order using this model, which has not been reported previously.

The work of Hyundae Lee was supported by the National Research Foundation of Korea grant (NRF-2015R1D1A1A01059357). The work of Yuliang Wang was supported by the Hong Kong RGC grant (No.~12328516) and the NSF of China (No.~11601459).

\section{Appendix}
\subsection{Appendix A.}

\begin{lem}(Single layer potential on an surface) Let $\Gamma_1 \subset \mathbb{R}^2$ be a bounded domain with $ \sup_{x,z \in \Gamma_1} |x-z| \leq 2$. The integral operator $L$: $L^\infty(\Gamma_{1})\mapsto L^\infty(\Gamma_1)$ defined by
$$L[\phi](x)=\int_{\Gamma_1}\frac{1}{|x-z|}\phi(z)dz$$
is bounded, i.e.
$$\|S[\phi]\|_{L^\infty(\Gamma_{1})}\leq C\|\phi\|_{L^{\infty}(\Gamma_{1})},$$
where $C$ is a constant independent of $x$.
\end{lem}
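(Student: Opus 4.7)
The plan is to reduce the claim to a uniform bound on the integral $\int_{\Gamma_1} \frac{1}{|x-z|}\,dz$, independent of $x \in \Gamma_1$. Since $\phi$ appears linearly and only in an absolute-value way, the obvious first move is the pointwise estimate
\begin{equation*}
|L[\phi](x)| \le \|\phi\|_{L^\infty(\Gamma_1)} \int_{\Gamma_1} \frac{1}{|x-z|}\,d\sigma(z),
\end{equation*}
so the whole question collapses to showing the kernel integral is uniformly bounded in $x$.

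For this I would use the diameter hypothesis $\sup_{x,z\in\Gamma_1}|x-z|\le 2$. Fix $x \in \Gamma_1$. Then $\Gamma_1$ is contained in the closed planar disk $\bar B(x,2) \subset \mathbb{R}^2$, and monotonicity of the integral (since the integrand is positive) gives
\begin{equation*}
\int_{\Gamma_1} \frac{1}{|x-z|}\,d\sigma(z) \le \int_{\bar B(x,2)} \frac{1}{|x-z|}\,d\sigma(z).
\end{equation*}
Switching to polar coordinates $(r,\theta)$ centered at $x$, the Jacobian $r$ cancels the singularity $1/r$, and
\begin{equation*}
\int_{\bar B(x,2)} \frac{1}{|x-z|}\,d\sigma(z) = \int_0^{2\pi}\int_0^2 \frac{1}{r}\,r\,dr\,d\theta = 4\pi.
\end{equation*}
Combining the two estimates gives $\|L[\phi]\|_{L^\infty(\Gamma_1)} \le 4\pi\,\|\phi\|_{L^\infty(\Gamma_1)}$, i.e.\ $C=4\pi$ works, and the bound is clearly independent of the particular $x$ chosen.

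There is no real obstacle here beyond bookkeeping: the only subtlety is that $\Gamma_1$ is a planar surface patch (the text assumes $\Gamma_\varepsilon$ lies in a plane, so $\Gamma_1$ is a bounded subset of $\mathbb{R}^2$), which is precisely what makes the kernel $1/|x-z|$ weakly singular and integrable under the area measure. If one later wants to treat curved $\Gamma_\varepsilon$ (as the paper says can be handled with minor modifications), one would replace $\Gamma_1$ by its projection onto the tangent plane at $x$ and absorb the metric distortion into the constant, but for the flat case stated in the lemma, the polar-coordinate computation above is all that is needed.
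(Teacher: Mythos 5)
Your proof is correct and follows essentially the same route as the paper: bound $|L[\phi](x)|$ by $\|\phi\|_{L^\infty}$ times the kernel integral, enlarge $\Gamma_1$ to the disk $B(x,2)$ using the diameter hypothesis, and evaluate the resulting integral in polar coordinates to obtain the constant $4\pi$. Nothing to add.
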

\begin{proof}
Let $B_{x,2} \subset \mathbb{R}^2$ denote the disk centered at $x$ with radius $2$. For given $x \in \Gamma_1$ we have
  \begin{equation*}
    \int_{\Gamma_1} \dfrac{1}{|x-z|} \, {\rm d}z
    \leq \int_{B_{x,2}} \dfrac{1}{|x-z|} \, {\rm d}z
    = \int_0^{2\pi} \int_0^2 \frac{1}{\rho} \rho \, {\rm d}\rho {\rm d}\theta = 4 \pi.
  \end{equation*}
  Hence
    \begin{equation*}
      \left| \int_{\Gamma_1} \dfrac{1}{|x-z|} \phi(z) \, {\rm d}z \right| \leq \|\phi\|_{L^\infty(\Gamma_1)} \int_{\Gamma_1} \dfrac{1}{|x-z|} \, {\rm d}z \leq 4 \pi  \|\phi\|_{L^\infty(\Gamma_1)}, \quad x \in \Gamma_1.
    \end{equation*}
    Therefore $L$ is bounded.

\end{proof}
\subsection{Appendix B.}
The value of
$$\int_{\Gamma_1}\int_{\Gamma_1}\frac{1}{|x-y|}dxdy$$
is $\frac{16}{3}\pi$, where $\Gamma_1$ is unit disk.

\begin{proof}
First, fix point $y$, draw a circle centered at $y$ with radius $r$, where the distance between $x$ and $y$ is $r$. Let $s$ be the distance between $0$ and $y$.

Then we have
\beq
\int_{\Gamma_1}\int_{\Gamma_1}\frac{1}{|x-y|}dxdy=\int_0^1 2\pi s\int_0^{1-s}2\pi drds+\int_0^12\pi s\int^{1+s}_{1-s}s\arccos\frac{s^2+r^2-1}{2sr} drds.
\label{app}
\eeq

When $r\leq 1-s$, by simple calculation, we can calculate the first term of (\ref{app}) as
$$\int_0^12\pi s\int_0^{1-s}2\pi drds=\frac{2}{3}\pi^2.$$

Changing the order of $ds$ and $dr$, the second term of (\ref{app}) becomes
\beq
4\pi\int^{1}_{0}\int^{1}_{1-r}s\arccos\frac{s^2+r^2-1}{2sr}
dsdr+4\pi\int^{2}_{1}\int^{1}_{r-1}s \arccos\frac{s^2+r^2-1}{2sr} dsdr.
\label{app2}
\eeq

By calculations, we get the value of the first term of (\ref{app2}) as $\frac{4}{3}+\frac{\pi}{6}-\frac{3}{4}\sqrt{3}$, and second term $-\frac{1\pi}{3}+\frac {3}{4}\sqrt{3}$.

Finally, we get
$$\int_{\Gamma_1}\int_{\Gamma_1}\frac{1}{|x-y|}dxdy=\frac{16}{3}\pi.$$
\end{proof}

    \end{document}